\theoremstyle{plain}
\newtheorem{observation}[theorem]{Observation}
\newcommand{\calA}{\mathcal{A}}
\newcommand{\calB}{\mathcal{B}}
\newcommand{\calC}{\mathcal{C}}
\newcommand{\calI}{\mathcal{I}}
\newcommand{\calN}{\mathcal{N}}
\newcommand{\calQ}{\mathcal{Q}}
\newcommand{\calP}{\mathcal{P}}
\newcommand{\calS}{\mathcal{S}}
\newcommand{\calT}{\mathcal{T}}
\newcommand{\calV}{\mathcal{V}}
\newcommand{\calX}{\mathcal{X}}
\newcommand{\intcomp}[1]{\calI\calC_{#1}}
\newcommand{\eval}[1]{{\sc EVAL(#1)}}
\newcommand{\cqeval}[1]{{\sc CQ-EVAL(#1)}}
\newcommand{\pcqeval}[1]{\textup{p}-{\sc CQ-EVAL(#1)}}
\newcommand{\cqevalp}{{\sc CQ-EVAL}}
\newcommand{\peval}[1]{\textup{p}-{\sc Eval(#1)}}
\newcommand{\ext}[1]{{\sc EXT(#1)}}
\newcommand{\extp}{\mathrm{EXT}}
\newcommand{\pExt}[1]{\textup{p}-{\sc EXT(#1)}}
\newcommand{\pHom}{\mathrm{p\textup{-}Hom}}
\newcommand{\Hom}{\mathrm{HOM}}
\newcommand{\branch}{\mathsf{branch}}
\newcommand{\cbranch}{\mathsf{cbranch}}
\newcommand{\var}{\mathsf{var}}
\newcommand{\ievar}{\mathsf{evar}}
\newcommand{\ifvar}{\mathsf{fvar}}
\newcommand{\idom}{\mathsf{dom}}
\newcommand{\ptime}{\mathsf{PTIME}}
\newcommand{\PSPACE}{\mathsf{PSPACE}}
\newcommand{\FPT}{\mathsf{FPT}}
\newcommand{\NP}{\mathsf{NP}}
\newcommand{\coNP}{\mathsf{coNP}}
\newcommand{\SigmaP}[1]{\Sigma_{#1}\mathsf{P}}
\newcommand{\wone}{\mathsf{W}[1]}
\newcommand{\cowone}{\mathsf{coW}[1]}
\newcommand{\icore}{\mathsf{core}}
\newcommand{\extcore}{\mathsf{extcore}}
\newcommand{\anspred}{\mathit{Ans}}
\newcommand{\str}[1]{\mathbf{#1}}
\newcommand{\strD}{\str D}
\newcommand{\pt}{PT\xspace}
\newcommand{\pts}{PTs\xspace}
\newcommand{\wdpt}{wdPT\xspace}
\newcommand{\wwdpt}{wwdPT\xspace}
\newcommand{\wdpts}{wdPTs\xspace}
\newcommand{\cq}{CQ\xspace}
\newcommand{\cqs}{CQs\xspace}
\newcommand{\rcs}{\mathsf{rcs}}
\newcommand{\ra}{\rightarrow}
\newcommand{\tractbox}[2]{
	\noindent
	\fbox{\begin{minipage}{\textwidth-6pt}\textbf{Tractability condition (#1):}#2
		\end{minipage}}	
}
\newcommand{\csts}{\mathsf{csts}}
\newcommand{\CST}{\mathsf{CST}}
\newcommand{\extcq}{\mathsf{extcq}}
\newcommand{\cia}[1]{\mathsf{cia}(#1)}
\newcommand{\relvnode}[1]{\mathsf{relv}(#1)}
\newcommand{\stopmaps}{\mathit{stop}}
\newcommand{\extmaps}{\mathit{extend}}
\theoremstyle{plain}
\title{On tractable query evaluation for SPARQL}
\titlerunning{On tractable query evaluation for SPARQL} %
\author[1]{Stefan Mengel}
\author[2]{Sebastian Skritek}
\affil[1]{CNRS, CRIL UMR 8188, Lens, France;
  \texttt{mengel@cril.fr}}
\affil[2]{Faculty of Informatics, TU Wien;
  \texttt{skritek@dbai.tuwien.ac.at}}
\authorrunning{S.\ Mengel and S.\ Skritek} %
\begin{document}

\maketitle

\begin{abstract}
Despite much work within the last decade on foundational properties of
SPARQL -- the standard query language for RDF data -- rather little is
known about the exact limits of tractability for this language. In
particular, this is the case for 
SPARQL
queries that contain the OPTIONAL-operator, even though 
it is %
one of the most intensively studied features of SPARQL.
The aim of our work is to provide a more thorough picture of tractable
classes of SPARQL queries.

In general, SPARQL query evaluation is $\PSPACE$-complete in combined complexity, and it remains $\PSPACE$-hard
already for queries containing only the OPTIONAL-operator. To amend this situation, research has focused on 
``well-designed SPARQL queries'' and their recent generalization ``weakly well-designed SPARQL queries''. For these two fragments 
the evaluation problem is $\coNP$-complete in the absence of projection and 
$\SigmaP{2}$-complete otherwise. 
Moreover, they have been shown to contain most SPARQL queries asked in practical settings.

In this paper, we study 
tractable classes of weakly well-designed queries 
in parameterized complexity considering the equivalent formulation as pattern trees. 
We give a complete characterization of the tractable classes in the case without projection. 
Moreover, we show a characterization of all tractable classes of \emph{simple} 
well-designed pattern trees in the presence of projection. 
\end{abstract}

\section{Introduction}
Driven by the increasing amount of RDF data available on the web, SPARQL---the 
standard query language for RDF---has received a lot of attention
in the last decade. Many different aspects of SPARQL have been studied,
for example its expressive power \cite{Pol2007,AG2008,PW2013,ZB2014},
complexity of query evaluation and optimization 
\cite{PAG2009,SML2010,PV2011,CEGL2012,CEGL2012b,LM2013,LPPS2013,PS2014,KRRV2015,BPS2015,KPS2016,CP2016}, 
semantic properties \cite{AP2011,AFPSS2015,AU2016}, and several more 
\cite{AG2016b,AG2016,AACP2013,KRU2015,ADK2016,KKG2017,GUKFC2016,ZBP2016}. 
One of the main features of SPARQL that attracted a lot of interest is the
{\sf OPTIONAL} operator that %
resembles the left outer join in the
Relational Algebra.
It allows users to define parts in a query for which an answer
is returned if possible. However, in case that providing a complete answer
including all optional parts is impossible, only a partial answer covering
those parts of the query that can be answered is returned.
Thus such partial answers are not lost.

This enables queries to retrieve meaningful answers even over incomplete information
or information provided under schemas which users do not have a good understanding of.
Given that both these characteristics---being incomplete and not well understood by
all users---are part of the nature of web data, the {\sf OPTIONAL} operator is an
essential feature of SPARQL. Thus research on SPARQL focused
on the {\sf OPTIONAL} operator \cite{PAG2009,AP2011,LPPS2013,PS2014,KK2016,PV2011,AU2016,AFPSS2015}. 

Unfortunately already early research revealed that in some cases
the semantics of the {\sf OPTIONAL}-operator can be unintuitive and inconsistent with
some principles of the semantic web \cite{AP2011}. 
To deal with this situation,
the fragment of \emph{well-designed SPARQL queries} was
introduced in \cite{PAG2009} and intensively studied later on 
\cite{LPPS2013,AP2011,PS2014,PV2011,KRU2015,BPS2015,ADK2016}.
The definition of well-designed queries forbids certain patterns of variable distributions
over {\sf OPTIONALs} which turn out to be responsible for the unintuitive semantics. Forbidding them leads
to a cleaner semantics for well-designed queries.

Regarding the evaluation of SPARQL queries, it was already shown in \cite{PAG2009} that
the problem is in general $\PSPACE$-complete in combined complexity, where the 
unintuitive behavior of the {\sf OPTIONAL}-operator
was identified as the main culprit \cite{SML2010}.
Since this behavior is absent in the well-designed fragment, as a side effect, the
complexity of the evaluation problem drops to $\coNP$-completeness for queries without 
projection \cite{PAG2009}, and $\SigmaP{2}$-completeness in case projection is allowed
\cite{BPS2015}. 

Recently, a generalization of the
well-designed queries called \emph{weakly well-designed queries}~\cite{KK2016} was proposed.
The main motivation was the observation that only
about half of the real-world queries on DBPedia are well-designed. Thus
a more relaxed condition on the variables was proposed that covers most of the real-world
queries while at the same time not increasing the complexity of query evaluation. 

Despite the wealth of research efforts on these restricted classes of queries, only little work
was done on actually identifying fragments of SPARQL containing the {\sf OPTIONAL}-operator
for which the evaluation problem is tractable. Some efforts in this direction
include \cite{BPS2015,LPPS2013}.
However, all of these results deal only with well-designed queries. Moreover, they
rely on the fact that well-designed queries can be seen as 
CQs extended by optional parts. As a consequence, their approach towards identifying
tractable fragments is to investigate to what extend tractable classes of CQs can be applied
to these queries. 
However, the exact limits of tractability have not been explored, yet.

The aim of our work is to close this gap and to provide a more thorough picture
of tractable classes of SPARQL queries containing the {\sf OPTIONAL}-operator. 
We study the complexity of query evaluation in the model of parameterized complexity
where, as usual, we take the size of the query as the parameter. As already argued in~\cite{PapadimitriouY99}, 
this model allows for a more fine-grained analysis than the classical perspectives:
on the one hand data complexity which allows impractical algorithms in which the 
size of the query is considered as a constant and on the other hand combined 
complexity where the query is assumed to have a size
similar to the database which often leads to overly pessimistic results.
In parameterized complexity, query answering is considered tractable, formally in $\FPT$, if, after a preprocessing
that only depends on the query, the actual evaluation can be done in polynomial time\cite{Grohe01,Grohe02}. 
This allows for potentially costly preprocessing on the generally small query while the dependency 
on the generally far bigger database is polynomial for an exponent independent of the query.
Parameterized complexity has found many applications in the complexity of query evaluation
problems, see e.g.~\cite{GroheSS01,Grohe2007,Marx10,Chen14}.

We remark that for Boolean Conjunctive Queries (BCQs) of bounded arity, it was shown in seminal
work of Grohe, Schwentick and Segoufin~\cite{GroheSS01} and Grohe~\cite{Grohe2007} that the tractable fragment in combined complexity 
and parameterized complexity coincide.
That is, for every class of BCQs of bounded arity the evaluation problem is in 
$\ptime$ if and only if it is in $\FPT$.
In contrast, it is known that for well-designed SPARQL queries with projection this property does not hold. This follows from~\cite{KPS2016} 
where it was shown that there are classes of well-designed queries for which the 
evaluation problem is $\NP$-hard, but fixed-paramter tractable. 
Thus the choice of the tractability notion makes a difference for the results.

To focus on the influence of the {\sf OPTIONAL}-operator, we restrict ourselves
to the \{{\sf AND},{\sf OPTIONAL}\}-fragment of SPARQL, in particular leaving out
unions and filters.
To infer our results on these queries, we will in fact work in the framework of
\emph{pattern trees} that was originally introduced in \cite{LPPS2013}
for data provided in RDF format and later extended to arbitrary relational vocabulary
\cite{BPS2015}.
Intuitively, pattern trees
represent the conjunctive parts of a query at the
nodes of the tree while the tree-structure reflects the nesting of the OPTIONALs.
Pattern trees constitute a query formalism
of their own using the ``depth-first approach'' semantics suggested in \cite{PAG2006}.
Our main technical results are characterizations of the tractable classes of pattern trees
in the setting without projection and of simple well-designed pattern trees in the presence
of projections.

From these results, one directly gets results for all fragments of 
\{{\sf AND},{\sf OPTIONAL}\} SPARQL without projection
for which the standard semantics and the depth-first semantics of \cite{PAG2006}
coincide, as is the case for the (weakly) well-designed fragment~\cite{LPPS2013,KK2016}. 
This is so because for \{{\sf AND},{\sf OPTIONAL}\} SPARQL queries one can easily compute
corresponding pattern trees by essentially just syntactic transformation.
These associated pattern trees can then be used to assess the complexity of the 
queries at hand.
Our approach thus has the advantage that, 
in case further classes of SPARQL queries for which the two possible semantics 
coincide are discovered in the future, our tractability results 
immediately carry over.

\noindent
\textbf{Summary of results and organisation of the paper.}
We study the following decision 
problem: Given a pattern tree, a database, and
a mapping, is the mapping a solution of the pattern tree over the database?
After some preliminaries in Section~\ref{sec:prelims} we will give the following results:
\begin{itemize}
	\item \emph{Tractable classes for an extension problem.}
		The semantics of weakly well-designed SPARQL queries is based on the
		idea of returning \emph{maximal mappings}. Intuitively, first the
		mandatory part of the query is mapped into the data, and then this
		partial mapping is extended as much as possible along the optional
		parts of the query. Thus, testing for extensions of partial solutions to a query
		is a central task in query evaluation. To formalize this problem, we
		introduce and analyze a problem called $\extp$ in Section~\ref{sec:extproblem}. 
		We then show that the tractable classes of $\extp$ are characterized by the
		treewidth of an auxiliary structure we call \emph{extension core}.
		This result will serve as an important building block in later sections
		and might be of interest in its own right.

	\item \emph{A complete characterization of tractable classes of pattern trees
		without projection.}
		In Section~\ref{sec:noproj}, we study the evaluation of pattern trees without projection,
		i.e., all variables occurring in the query are also part of the result. 
		Using the notions and results developed in Section~\ref{sec:extproblem}, 
		we provide a complete characterization
		of all tractable classes of both, pattern trees and weakly well-designed
		SPARQL queries.

	\item \emph{A complete characterization of tractable classes of simple 
		well-designed pattern trees with projection.}
		In Section~\ref{sec:withproj}, we study well-designed pattern trees with
		projection. 
		For technical reasons that we discuss in the conclusion, we will restrict 
		ourselves to simple pattern trees in this section, i.e.,
		pattern trees where
		no two atoms share the same relation name. This can be seen as analyzing 
		queries by their underlying ``graph structure'' similar to e.g.~\cite{GroheSS01,Chen14}
		while discarding the possibility of taking cores to simplify instances.
		Again, we provide a complete characterization of the tractable classes.
\end{itemize}
In Section Section~\ref{sec:conclusion}, we discuss our results and potential extensions to 
conclude the paper.

\section{Preliminaries}\label{sec:prelims}
{\bf Graphs.} 
We consider only undirected, simple graphs $G = (V,E)$ with 
standard notations but sometimes
write $t \in G$ to refer to a node $t \in V(G)$.
A graph $G_2$ is a subgraph of a graph $G_1$ if
$V(G_2) \subseteq V(G_1)$ and $E(G_2) \subseteq E(G_1)$.
A tree is a connected, acyclic graph. A subtree is a connected,
acyclic subgraph. A \emph{rooted tree} $T$ is a tree with 
one node $r \in T$ marked as its root. Given two nodes
$t, \hat t \in T$, we say that $\hat t$ is an ancestor of $t$ if
$\hat t$ lies on the path from $r$ to $t$.
Likewise, $\hat t$ is the parent node of $t$ ($t$ is a child of $\hat t$)
if $\hat t$ is an ancestor of $t$ and $\{t, \hat t\} \in E(T)$.
For a subtree $T'$ of $T$ that contains the root
of $T$, a node $t \in T$ is a child of $T'$ if $t \notin T'$
and $\hat t \in T'$ for the parent node 
$\hat t$ of $t$. We write $ch(T')$ for the set of all children
of $T'$.

A {\em tree decomposition} of a graph $G = (V,E)$ is a pair $(T,\nu)$,
where $T$ is a tree and $\nu : V(T) \to 2^V$, that satisfies the following:
(1) For each $u \in V$ the set $\{s \in V(T) \mid u \in \nu(s)\}$ is a
connected subset of $V(T)$, and (2) each edge of $E$ is contained in one of
the sets $\nu(s)$, for $s \in V(T)$.
The {\em width} of $(T,\nu)$ is $(\max{\{|\nu(s)| \mid s \in V(T)\}}) - 1$.
The {\em treewidth} of $G$ is the minimum width of its tree decompositions. 

\smallskip \noindent
{\bf Atoms and Conjunctive queries.} 
We assume familiarity with the relational model, especially with the
concept of {\em conjunctive queries (CQs)}, and refer to \cite{AHV}
for details. In particular, we will heavily use the fact that a conjunctive query 
can alternatively be seen as a set $\calA$ of atoms on a database $\str D$ or
as a homomorphism problem between a structure $\str A$ associated to these
atoms in a canonical way and $\str D$. In the following, we will switch between these 
perspectives whenever convenient. Sets of atoms will be denoted in calligraphic letters
$\calA, \calB, \ldots$ whereas structures will be denoted as $\str A, \str B, \ldots$.

In the following, we fix some notation.
As usual, a \emph{$\sigma$-structure} $\str A$ consists of a finite
set $A = \idom(\str A)$ %
and a relation $R^{\str A} \subseteq A^r$ for each relation symbol
$R \in \sigma$ of arity $r$.
For a set $\calA$ of atoms let 
$\var(\calA)$ denote the set of variables appearing in $\calA$.
Similarly, for a mapping $\mu$ we denote with $\idom(\mu)$ the
set of elements on which $\mu$ is defined. 
For a mapping $\mu$ and a set of variables $\calV$, we use $\mu|_{\calV}$
to describe the restriction of $\mu$ to the variables in $\idom(\mu)
\cap \calV$. 
We say that a mapping $\mu$ is an extension of a mapping $\nu$ if
$\mu|_{\idom(\nu)} = \nu$. 
By slight abuse of notation, we use operators
$\cup, \cap, \setminus$ also between sets $\calV$ and tuples $\vec v$
of variables, like in $\calV \setminus \vec v$.

A homomorphism between two
$\sigma$-structures $\str A$ and $\str B$ is a mapping $\idom(\str A)
\ra \idom(\str B)$ that, for all $R \in \sigma$ maps all tuples in
$R^{\str A}$ to tuples in $R^{\str B}$.
We write $h\colon \str A \ra \str B$ to denote a homomorphism $h$
from $\str A$ to $\str B$.
A minimal 
substructure $\str A'$ of $\str A$ such that there is a homomorphism
$\str A\rightarrow \str A'$ is called a \emph{core} of $\str A$.
We recall that all cores of
$\str A$ are unique up to isomorphism and thus speak of \emph{the} core
of $\str A$ which we denote by $\icore(\str A)$.

For a structure $\str A$ and a set $A \subseteq \idom(\str A)$, 
we write $\str A \setminus A$ to denote the restriction of
$\str A$ to $\idom(\str A) \setminus A$
(we provide a formal definition of this concepts in the appendix).
For two structures
$\str A, \str B$, the structure $\str A \setminus \str B$ contains
the relations in $\str A$ but not in $\str B$.
The treewidth of a set of atoms or a structure is the treewidth of
the respective Gaifman graph.

We sometimes write CQs $q$ as
$\anspred(\vec{x}) \leftarrow \calB$, where the body
$\calB = \{R_1(\vec{v}_1), \dots, R_m(\vec{v}_m)\}$
is a set of atoms and $\vec{x}$ are the {\em free variables}.
A Boolean \cq (BCQ) is a \cq with no free variables.
We define $\var(q) = \var(\calB)$. The existential variables
are implicitly given by $\var(\calB) \setminus \vec x$.
The result $q(\strD)$ of $q$ over a database $\strD$ is the set
of tuples $\{\mu(\vec x) \mid
\mu \colon \calB \ra \strD\}$.

\smallskip \noindent
{\bf Pattern trees (\pts).} 
  A {\em pattern tree} (short: \pt) $p$ over a relational schema $\sigma$ is
  a tuple $(T, \lambda, \calX)$ where $T$ is a rooted tree and $\lambda$ maps
  each node $t \in T$ to a set of relational atoms over $\sigma$.
  The set $\calX$ of variables denotes the {\em free variables} of the \pt. 
  We may write $((T,r), \lambda, \calX)$ to emphasize that $r$ is the root node of $T$.

  For a \pt $(T, \lambda, \calX)$ and a subtree $T'$ of $T$, we write $\lambda(T')$
to denote the set $\bigcup_{t \in V(T')}\lambda(t)$. %
We may write $\var(t)$ instead of $\var(\lambda(t))$, and $\var(T')$ instead of
$\var(\lambda(T'))$. We further define $\ifvar(t) = \var(t) \cap \calX$ and
$\ievar(t) = \var(t) \setminus \calX$ as the free and existential variables
in $T'$, respectively. These definitions extend naturally to subtrees $T'$
of $T$.
We call a \pt $(T, \lambda, \calX)$ \emph{projection free} if $\calX = \var(T)$.
We may write $(T, \lambda)$ to emphasize a \pt to be projection free.

We define the order $\prec$ among nodes $t \in T$ as 
$t_1 \prec t_2$ if $t_1$ is visited before $t_2$ in a
depth-first,left-to-right traversal of $T$. Also, for
$t \in T$, and a (not necessarily proper) subtree $T'$
of $T$, let $T'_{\prec t}$ be the subtree of $T$ that
contains all nodes $t' \in T'$ with $t' \prec t$.

\smallskip
\noindent
{\bf Semantics of \pts.} 
Evaluating a \pt $p$ with free variables $\calX$ over a database
$\strD$ returns a set $p(\strD)$ of mappings
$\mu \colon \calV \ra \idom(\strD)$ with $\calV \subseteq \calX$.
Intuitively, the idea of the evaluation is to evaluate the root
node first, resulting in a set of mappings. Then, in a top-down
left-to right traversal of the tree these mappings are extended
as far as possible by the solutions at the different nodes.
The semantics of $p(\strD)$ is, however, usually defined by
providing a characterization of the mappings generated by this
idea of a ``top-down evaluation''. We follow this approach and
use the characterization of solutions for weakly well-designed
pattern trees from \cite{KK2016} which also works as a definition
for the semantics of arbitrary \pts studied here.

\begin{definition}[(\cite{KK2016} pp-solution)]
 For a \pt $p = ((T,r), \lambda)$ and a database $\strD$, 
 a mapping $\mu \colon \calV \ra \idom(\strD)$ (with $\calV \subseteq
 \var(T)$) is a \emph{potential partial solution (pp-solution)}
 to $p$ over $\strD$ if there is a subtree $T'$ of $T$ containing $r$
 such that $\mu \colon \lambda(T') \ra \strD$.
\end{definition}
Observe that if $\mu$ is a pp-solution, then although this might be witnessed
by different subtrees $T'$, there exists a unique maximal such subtree $T'$.
We will denote it by $T_\mu$.
Also, for a mapping $\mu$ and some node $t \in T$, let $\mu_{\prec t}$ denote
the restriction $\mu|_{\var(T_{\prec t})}$.
\begin{definition}[($p(\strD)$)]\label{def:semPTs}
 Let $p = (T, \lambda, \calX)$ be a \pt and $p' = (T, \lambda)$ be
 the same but projection-free \pt. A mapping $\mu$ is in $p'(\strD)$
 if
 (1) $\mu$ is a pp-solution to $p'$ over $\strD$, and
 (2) there exists no child node $t'$ of $T_\mu$ and 
    homomorphism $\mu' \colon \lambda(t') \ra \strD$ extending $\mu_{\prec t'}$.
 For \pts with projection, we have 
 $p(\strD) = \{\mu|_{\calX} \mid \mu \in p'(\strD)\}$.
\end{definition}

\begin{example}\label{ex:PTsemantic}
 Consider the \pt $p_1$ in Figure~\ref{fig:relevantNodes}. Intuitively
 it asks for tickets $t$ and tries to assign seats to each ticket.
 In doing so, it first tries to find a seat in the ticket class (left
 child). If this is not possible, it tries to return any seat (right
 child). This reflects the intuitive semantics of pattern trees that
 nodes earlier in the order $\prec$ are evaluated first.
 
 Assume the database
 $\str D = \{{\it ticket}(1),$ ${\it class}(1,E),$ ${\it seatclass}(1,E),$ 
 ${\it seatclass}(2,F),$ ${\it empty}(1),$ ${\it empty}(2)\}$.
 The mapping $\mu = \{(x,1), (s,2), (c,F)\}$ is a pp-solution, as it maps
 the root and the second child node into $\strD$,
 and these two nodes contain exactly the variables in $\idom(\mu_1)$. 
 But $\mu$ is not ``maximal'' according to Definition~\ref{def:semPTs}.
 When testing for an extension to the first
 child, we may not test $\mu$, but $\mu|_{\{t\}}$, since $t$ is
 the only variable in $\idom(\mu)$ occurring in a node that precedes the
 first child in the order $\prec$.
 Thus $p_1(\strD) = \{\{(x,1), (s,1), (c,E)\}\}$.

\end{example}
\begin{figure}[t]
 \begin{center}
 \begin{tikzpicture}
  \node[] at (-4, 0) {$p_1\colon$};
  \node[draw, rounded corners=1mm] (r1) at (-1.6,0) { $ticket(t)$ };
  \node[draw, rounded corners=1mm] (t1) at (-3.1,-1) { \begin{minipage}{7.7em}
  	\centerline{${\it seatclass}(s,c)$}
  	\centerline{${\it empty}(s)$  ${\it class}(t,c)$}
  	\end{minipage}};
  \node[draw, rounded corners=1mm] (t2) at (-.3,-1) { \begin{minipage}{6em}
  	\centerline{${\it seatclass}(s,c)$}
  	\centerline{${\it empty}(s)$}
  	\end{minipage}};
  \draw (r1) -- (t1);
  \draw (r1) -- (t2);

   \node[] at (2, 0) {$p_2\colon$};
  \node[draw, rounded corners=1mm] (r1) at (4,0) { $a(x)$ };
  \node[draw, rounded corners=1mm] (t1) at (2.7,-1) { \begin{minipage}{6em}
  	\centerline{$c(y_i, y_j)$}
  	\centerline{$1 \leq i \neq j \leq n$}
  	\end{minipage}};
  \node[draw, rounded corners=1mm] (t2) at (5,-1) { \begin{minipage}{3em}
  	\centerline{$c(z_1, z_2)$}
  	\end{minipage}};
  \draw (r1) -- (t1);
  \draw (r1) -- (t2);

  \node[] at (6.6, 0) {$p_3\colon$};
  \node[draw, rounded corners=1mm] (r1) at (7.6,0) { $a(x)$ };
  \node[draw, rounded corners=1mm] (t1) at (6.9,-1) { \begin{minipage}{3em}
  	\centerline{$b(y)$}
  	\centerline{$bb_{1}(y_1)$}
  	\end{minipage}};
  \node[draw, rounded corners=1mm] (t2) at (8.3,-1) { \begin{minipage}{3em}
  	\centerline{$bt_{1}(y_1)$}
  	\centerline{$c(z)$}
  	\end{minipage}};
  \draw (r1) -- (t1);
  \draw (r1) -- (t2);

 \end{tikzpicture}
 \end{center}
 \caption{Example pattern trees referenced throughout the paper.}
 \label{fig:relevantNodes}
\end{figure}
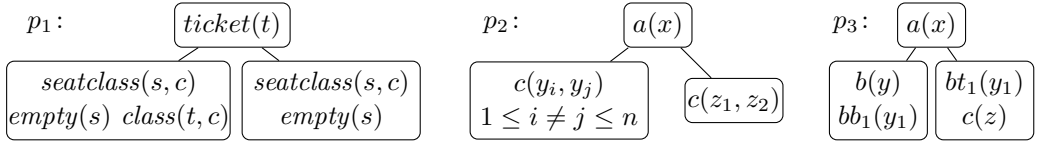

\smallskip \noindent
{\bf Parameterized complexity.} 
We only give a bare-bones introduction to parameterized complexity and refer the
reader to~\cite{FlumG06} for more details.
Let $\Sigma$ be a finite alphabet. A \emph{parameterization} of $\Sigma^*$
is a polynomial time computable mapping $\kappa \colon \Sigma^* \ra \calN$.
A \emph{parameterized problem} over $\Sigma$ is a pair ($L, \kappa$) where
$L \subseteq \Sigma^*$ and $\kappa$ is a parameterization of $\Sigma^*$.
We refer to $x \in \Sigma^*$ as the instances of a problem, and to the
numbers $\kappa(x)$ as the parameters.

A parameterized problem $E = (L, \kappa)$ belongs to the class $\FPT$ of
\emph{fixed-parameter tractable} problems if there is an algorithm $A$ 
deciding $L$, a polynomial $p$, and a computable function
$f \colon \calN \ra \calN$ such that the running time of $A$ is at most
$f(\kappa(x)) \cdot p(|x|)$. 

In this paper, for classes $\calP$ of \pts, we study the problem \peval{$\calP$}
defined below. We will also use the evaluation
problem \pcqeval{$\calQ$} for classes $\calQ$ of \cqs.

\begin{center}
\fbox{\hspace{-1mm}\begin{tabular}{ll}
 \multicolumn{2}{l}{\peval{$\calP$}} \\\hline 
 \\[-1ex]
  {\small INPUT}:&\hspace{-1em}A \wdpt $p \in \calP$, 
  	a database $\strD$, \\
  		&\hspace{-1em}and a mapping $\mu$.\\
  {\small PARAMETER}:&\hspace{-.8em}$|p|$\\
  {\small QUESTION}:&\hspace{-.8em}Does 
  	$\mu \in p(\strD)$ hold?\\
\end{tabular}\hspace{-1mm}}
\fbox{\hspace{-1mm}\begin{tabular}{ll}
 \multicolumn{2}{l}{\pcqeval{$\calQ$}} \\\hline 
 \\[-1ex]
  {\small INPUT}:&\hspace{-1em}A \cq $q \in \calQ$, a database $\strD$, \\
  	&\hspace{-1em}and a tuple $\vec v$.\\
  {\small PARAMETER}:&\hspace{-.8em}$|q|$\\
  {\small QUESTION}:&\hspace{-.8em}Does 
  	$\vec v \in q(\strD)$ hold?\\
\end{tabular}\hspace{-1mm}}
\end{center}

Let $E = (L,\kappa)$ and $E' = (L', \kappa')$ be parameterized
problems.
An \emph{$\FPT$-Turing reduction} from $E$ to $E'$ is an algorithm deciding $E$ with runtime 
at most $f(\kappa(x)) \cdot p(|x|)$ for a computable function $f$ and a polynomial $p$
where the algorithm can make calls to 
an oracle for $E'$ such that every oracle query $y$ satisfies $\kappa'(y)\le f(\kappa(x))$.
An $\FPT$-Turing reduction is called a \emph{many-one reduction}, if it makes only one oracle call
and accepts if and only if the answer to oracle call is positive.
Unless stated otherwise, all our $\FPT$-reductions will be many-one.
Of the rich parameterized hardness theory, we will only use the
class $\wone$ of parameterized problems. To keep this introduction short, we only
define a parameterized problem $(L, \kappa)$ to be $\wone$-hard if there is a problem $(L', \kappa')$
that reduces to $(L, \kappa)$. It is generally conjectured that $\FPT\ne \wone$ and thus in particular
$\wone$-hard problems are not in $\FPT$. We will take the hardness results on the problems $(L',\kappa')$ 
from the literature, often from the following result:
\begin{theorem}[(\cite{Grohe2007})] \label{theo:grohe}
 Let $\calQ$ be a decidable class of \emph{Boolean \cqs}.
 If there is a constant $c$ such that the treewidth of the core of each
 	\cq in $\calQ$ is bounded by $c$, then \pcqeval{$\calQ$} $\in$ $\FPT$. Otherwise,
 	\pcqeval{$\calQ$} is $\wone$-hard.
\end{theorem}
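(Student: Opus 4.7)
The plan is to handle the two directions separately, since the theorem has a dichotomy form.

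For the tractability direction (bounded treewidth of cores), I would give a direct FPT algorithm. On input $(q,\strD,\vec v)$ with $q \in \calQ$, the algorithm first computes $\icore(q)$; since this depends only on $q$, its cost is absorbed into the $f(|q|)$ factor allowed by the FPT definition. Because $q$ and $\icore(q)$ are homomorphically equivalent, it holds that $\vec v \in q(\strD)$ iff $\vec v \in \icore(q)(\strD)$. By assumption $\icore(q)$ has treewidth at most $c$, so the standard dynamic programming of Chekuri--Rajaraman / Dalmau--Kolaitis--Vardi along a width-$c$ tree decomposition evaluates this query in time $|\strD|^{O(c)}$. Since $c$ is a constant independent of the query, this is polynomial in $|\strD|$ with a fixed exponent, giving the required FPT bound.

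For the hardness direction, I would reduce from the parameterized $k$-clique problem, which is the canonical $\wone$-hard problem. The task is, given an unbounded class of core treewidths in $\calQ$, to construct for every $k$ an instance of $\pcqevalp(\calQ)$ encoding a $k$-clique query. First, using the assumption one picks a sequence of queries $q_1, q_2, \ldots \in \calQ$ such that the treewidth of $\icore(q_i)$ tends to infinity. The central tool is Grohe's refinement of the excluded-grid theorem: in any structure whose core has large treewidth one can find a large grid as a topological minor of the core, and — this is the delicate part — the minor embedding can be lifted back to the original structure in a way that is \emph{compatible} with homomorphisms into arbitrary target databases.

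Using this, the reduction proceeds as follows: given an instance $(G,k)$ of $k$-clique, pick $q = q_i$ whose core contains a $k \times k$ grid minor (for $i$ large enough, depending only on $k$), then build a database $\strD(G)$ by replacing each ``grid vertex'' gadget of $\icore(q)$ with a copy of $V(G)$ and each ``grid edge'' gadget with the edge relation of $G$, extending to the full $q$ via the minor-embedding lift. One then shows, using the robustness of cores under this construction, that $q$ has a homomorphism into $\strD(G)$ iff $G$ contains a $k$-clique, and that $|q|$ depends only on $k$, yielding an FPT-reduction.

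The main obstacle, and the real content of Grohe's theorem, is the second paragraph: ensuring that the grid structure found inside $\icore(q)$ is actually \emph{witnessed} by $q$ itself in a homomorphism-preserving way, so that the clique encoding cannot be trivialized by mapping $q$ through a non-injective homomorphism that collapses the grid. This requires a careful interplay between the combinatorial excluded-grid argument and the universal property of cores, which is why the proof of this direction is substantially harder than the algorithmic direction.
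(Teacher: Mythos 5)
This statement is imported from Grohe's 2007 paper; the present paper gives no proof of it, so the only meaningful comparison is against Grohe's original argument. Your sketch is structurally faithful to that argument: the positive direction (compute the core in time depending only on $q$, use homomorphic equivalence, then run the bounded-treewidth join algorithm in $|\strD|^{O(c)}$ time) is complete and correct, and for the hardness direction you correctly identify the two essential ingredients, namely the excluded-grid theorem applied to the core and a reduction from parameterized clique, as well as the genuinely hard point: forcing every homomorphism from $q$ into the constructed database to "see" the grid rather than collapsing it. That last point is the bulk of Grohe's paper (his minor-map machinery), and your proposal names it without carrying it out, which is fair for a cited result but means the hardness direction remains an outline rather than a proof.

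There is, however, one concrete gap you should flag: the hardness direction requires the class $\calQ$ to be over a vocabulary of \emph{bounded arity}, a hypothesis that is missing both from the statement as reproduced here and from your sketch (the paper's introduction does state it, and its applications of the theorem all take place in bounded-arity settings). Without bounded arity the claim is false: take $q_n$ to be the single atom $R_n(x_1,\dots,x_n)$ with distinct variables; its core is itself, its Gaifman graph is $K_n$, so the cores have unbounded treewidth, yet evaluation is trivial (check whether $R_n^{\strD}$ is nonempty). Your reduction would break at the very first step, since "large treewidth of the core" no longer yields a large grid minor that the query can be made to trace out injectively --- the treewidth can be concentrated in a single wide atom that any database satisfies or refutes atomically. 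So the excluded-grid argument, and hence the whole hardness direction, only goes through once bounded arity is assumed.
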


We will also study $\Hom(\calC)$ and $\pHom(\calC)$
for a class $\calC$ of structures. The input of both problems 
consists of a structure $\str A \in \calC$ and a structure $\str B$,
and the question is whether there exists a homomorphism 
$h \colon \str A \ra \str B$. For $\pHom(\calC)$, the parameter is
the size of $\str A$. 

\section{The Extension Problem}\label{sec:extproblem}

\begin{toappendix}

 \subsection{Additional Definitions}
 The proofs in this section require a few concepts that have already
 been introduced informally in the main part of this paper. In order
 to use these concepts in the fool proofs we first provide a formal
 definition for them.
 
 We start by recalling two well-known operators from the Relational Algebra:
 the projection $\pi$ and the selection $\sigma$. For a relation
 $R$ of arity $k$, we denote each of the $k$ positions of a tuple 
 $(a_1, \dots, a_k) \in R$ by their index $i$. Then the projection
 $\pi_{i_1,\dots, i_\ell}(a_1, \dots, a_k)$ returns the tuple
 $(a_{i_1}, \dots, a_{i_\ell})$, and for a relation $R$ the projection
 $\pi_{i_1,\dots, i_\ell}(R) = \{\pi_{i_1,\dots, i_\ell}(a_1, \dots, a_k)
 \mid (a_1, \dots, a_k) \in R\}$.
 For the selection $\sigma_{i_1=v_1, \dots, i_\ell=v_\ell}$, we
 say that a tuple $(a_1, \dots, a_k)$ satisfies the selection condition
 $({i_1=v_1, \dots, i_\ell=v_\ell)}$ if $a_{i_j} = v_j$ for all $1 \leq j
 \leq \ell$. For a relation $R$ we get
 $\sigma_{i_1=v_1, \dots, i_\ell=v_\ell}(R) = \{ (a_1, \dots, a_k) \in R
 \mid (a_1, \dots, a_k) \text{ satisfies } {(i_1=v_1, \dots, i_\ell=v_\ell)}\}$.

\medskip \noindent
\textbf{Projection of a Structure.}
Let $\str S$ be a $\sigma$-structure, and consider $V \subseteq \idom(\str S)$.
Then the \emph{projection of $\str S$ under $V$}, denoted by $\str S \setminus V$
returns the following $\sigma'$-structure (the set of relational symbols in
$\sigma'$ is defined implicitly by the newly introduced relations described below):
For every relation symbol $R \in \sigma$ and each $(a_1, \dots, a_k) \in R^{\str S}$,
let $o_1, \dots, o_{\ell}$ be those positions of $(a_1, \dots, a_k)$ that do not
contain values from $V$, i.e.\ $a_{o_j} \notin V$ for $q \leq j \leq \ell$, and
let $\{i_1, \dots, i_m\} = \{1, \dots, k\} \setminus \{o_1, \dots, o_\ell\}$
be those positions such that $a_{i_j} \in V$ for $1 \leq j \leq m$.

Then in structure $\str S \setminus V$ the relation
$$(R^{i_1=a_{i_1}, \dots, i_m=a_{i_m}})^{\str S \setminus V} 
\text{ contains the tuple } (a_{o_1}, \dots, a_{o_\ell}).$$

These are all the tuples in any relation in $\str S \setminus V$.

\medskip \noindent
\textbf{Projection of a Pair of Structures under a homomorphism.}
Next, let $(\str Q, \str D)$ be a pair of $\sigma$-structures and let
$h$ be a mapping on (a subset of) $\idom(\str Q)$.
We define the \emph{projection of $(\str Q, \str D)$ under $h$} as the
pair of $\sigma'$-structures $(\str {Q'}, \str D')$ as follows
(the vocabulary $\sigma'$ is again defined implicitly):

We start by defining $\str Q'$. For every relation symbol $R \in \sigma$
and every tuple $(a_1, \dots, a_k) \in R^{\str Q}$, we distinguish
two cases.
\begin{itemize}
	\item If $\{a_1, \dots, a_k\} \subseteq \idom(h)$ and 
		$(h(a_1), \dots, h(a_k)) \in R^{\str D}$, then ignore
		$(a_1, \dots, a_k)$ (i.e., no tuple derived from
		$(a_1, \dots, a_k)$ occurs in $\str Q'$).

	\item Otherwise, let 
		$\{i_1, \dots, i_\ell\} \subseteq \{a_1, \dots, a_k\}$
		be those positions of $(a_1, \dots, a_k)$ such
		that $a_{i_j} \in \idom(h)$ for $1 \leq j \leq \ell$,
		and $\{o_1, \dots, o_m\} = 	\{1, \dots, k\} \setminus
		\{i_1, \dots, i_\ell\}$ those positions such that
		$a_{o_j} \notin \idom(h)$.
		Then 
		$$ (R^{i_1=h(a_{i_1}), \dots, i_\ell=h(a_{i_\ell})})^{\str Q'} 
			\text{ contains the tuple }
		(a_{o_1}, \dots, a_{o_m}).$$
\end{itemize}
Observe that the second case includes the possibility that
$\ell = k$, i.e.\ that $a_i \in \idom(h)$ for all $1 \leq i \leq k$,
but that $(h(a_1), \dots, h(a_k)) \notin R^{\str D}$.
In this case, the result is a 0-ary relation symbol 
$R^{1=h(a_1), \dots, k=h(a_k)}$, and 
$(R^{1=h(a_1), \dots, k=h(a_k)})^{\str Q'}$ contains the empty
tuple $()$.

Next we define $\str D'$.
For every relation symbol $R^{i_1=b_1,\dots,i_\ell=b_\ell} \in \sigma'$
introduced by the definition of $\str Q'$ and every tuple
$(a_1, \dots, a_m) \in (R^{i_1=b_1,\dots,i_\ell=b_\ell})^{\str Q'}$,
let $k$ be the arity of the original relation symbol $R$ from $\sigma$.
Then the positions $1, \dots, m$ in $R^{i_1=b_1,\dots,i_\ell=b_\ell}$
correspond to positions $j_1, \dots, j_m$ in $R$. In fact, 
$\{j_1, \dots, j_m\} = \{1, \dots, k\} \setminus \{i_1, \dots, i_\ell\}$.
Then 
$$ (R^{i_1=b_1,\dots,i_\ell=b_\ell})^{\str D'} \text{ contains the tuples }
\pi_{j_1, \dots, j_m}(\sigma_{i_1=b_1,\dots,i_\ell=b_\ell}(R^{\str D}))$$

The following observation follows immediately from this definition.
\begin{observation} \label{obs:projectionhelps}
 Let two $\sigma$-structures $\str Q, \str D$ and a
 mapping $h \colon Q \ra \idom(\str D)$ be
 given where $Q \subseteq \idom(Q)$, and let 
 $\str Q', \str D'$ the projection of $(\str Q, \str D)$  under $h$.
 Then there exists some extension $h' \colon \str Q \ra \str D$ of $h$
 if and only if there exists a homomorphism $\hat h\colon \str Q' \ra \str D'$
 (i.e., if $(\str Q', \str D')$ is a positive instance of $\Hom$).
\end{observation}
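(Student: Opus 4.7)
The plan is to verify the two directions of the equivalence by a straightforward unfolding of the definitions of $\str Q'$ and $\str D'$. The intuition is that the construction splits each tuple $(a_1,\dots,a_k) \in R^{\str Q}$ into (i)~a ``frozen'' part encoded into the new relation symbol $R^{i_1=b_1,\dots,i_\ell=b_\ell}$ via the values $b_j = h(a_{i_j})$ already fixed by $h$, and (ii)~a ``free'' part consisting of the remaining positions $(a_{o_1},\dots,a_{o_m})$ that still must be mapped. The constraints on $\str D'$ are defined exactly so that any homomorphism witnessing the ``free'' part, together with $h$ witnessing the ``frozen'' part, reconstructs a full assignment.

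For the ``only if'' direction, I would take an extension $h' \colon \str Q \to \str D$ of $h$ and set $\hat h := h'|_{\idom(\str Q')}$, observing that $\idom(\str Q') \subseteq \idom(\str Q)\setminus Q$ so the restriction is well defined. To check $\hat h$ is a homomorphism $\str Q' \to \str D'$, I would pick a new symbol $R^{i_1=b_1,\dots,i_\ell=b_\ell}$ and a tuple $(a_{o_1},\dots,a_{o_m})$ in its extension in $\str Q'$. By construction this tuple arose from some $(a_1,\dots,a_k) \in R^{\str Q}$ with $h(a_{i_j}) = b_j$. Applying $h'$ and using that $h'$ extends $h$ shows $(h'(a_1),\dots,h'(a_k)) \in R^{\str D}$, so this tuple lies in $\sigma_{i_1=b_1,\dots,i_\ell=b_\ell}(R^{\str D})$, and projecting to the free positions yields $(\hat h(a_{o_1}),\dots,\hat h(a_{o_m})) \in (R^{i_1=b_1,\dots,i_\ell=b_\ell})^{\str D'}$, as required.

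For the ``if'' direction, given $\hat h \colon \str Q' \to \str D'$, I would define $h' := h \cup \hat h$; the domains are disjoint since values in $Q$ are never placed into $\idom(\str Q')$ by the construction. To verify $h' \colon \str Q \to \str D$, I would take an arbitrary $(a_1,\dots,a_k) \in R^{\str Q}$ and split into two cases. If all $a_i \in Q$ and $(h(a_1),\dots,h(a_k)) \in R^{\str D}$, the tuple is immediately preserved. Otherwise, the construction places a tuple into $(R^{i_1=b_1,\dots,i_\ell=b_\ell})^{\str Q'}$, and homomorphy of $\hat h$ yields a tuple in $(R^{i_1=b_1,\dots,i_\ell=b_\ell})^{\str D'}$, which by definition of $\str D'$ comes from a tuple in $R^{\str D}$ having $b_j = h(a_{i_j})$ at positions $i_j$ and $\hat h(a_{o_j}) = h'(a_{o_j})$ at the free positions, i.e.\ exactly $(h'(a_1),\dots,h'(a_k))$.

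The only point requiring care, and the main (minor) obstacle, is the degenerate subcase where $\ell = k$ and $(h(a_1),\dots,h(a_k)) \notin R^{\str D}$: here $\str Q'$ contains the $0$-ary empty tuple $()$ in the relation $R^{1=h(a_1),\dots,k=h(a_k)}$, while the corresponding relation in $\str D'$ is empty, so no homomorphism $\hat h$ can exist. I would explicitly note this case to show that the equivalence handles the ``negative evidence'' that $h$ is already incompatible with $R^{\str D}$, which is exactly what makes the reduction from extension to $\Hom$ work.
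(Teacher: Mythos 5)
Your proof is correct and is exactly the direct definition-unfolding that the paper has in mind (the paper gives no explicit proof, stating only that the observation ``follows immediately from this definition''). Your explicit treatment of the degenerate $\ell=k$ case with the $0$-ary relation is a worthwhile addition, since that is the one place where the ``immediately'' deserves a second look.
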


We need to take care of one special case: Assume that, for a pair
$(\str Q, \str D)$ and a homomorphism $h$ such that $\str Q$ is already
a projection of some structure $\str L$ under a set $V \subseteq \idom(\str L)$,
we want to get the projection of $(\str Q, \str D)$ under $h$.
I.e.\ the vocabulary of $\str Q$ already contains relation symbols of the form
$R^{i_1 = b_1, \dots, i_\ell=b_\ell}$.
If for any of the $i_j$ ($1 \leq j \leq \ell$) it is the
case that $b_j \in \idom(h)$, then in the resulting
structure we replace $b_j$ in the name of the resulting
atom by $h(b_j)$.
In certain situations, this renaming of the relational symbols
will ensure the resulting structures to be over the same relational
schema, which is a prerequisite for finding homomorphisms. 

\subsection{Full Proofs of Section~\ref{sec:extproblem}}
\end{toappendix}

Definition~\ref{def:semPTs} shows that there are two potential sources
of complexity for the evaluation problem:
First, to determine whether some mapping is a pp-solution, and
second to check if a pp-solution is ``maximal''. 
As we will discuss in the next section, for projection free \pts,
the test for a pp-solution is easy. Hardness is thus
exclusively due to
testing maximality.

This problem is closely related to the homomorphism problem,
and can be easily reduced to it. However, done naively, this reduction loses the
information about the parts of the pattern tree already mapped into the database, which
might result in the reduction of an easy instance to a hard instance
of the homomorphism problem.

Thus, for a class $\calC$  of pairs of structure, in this section 
we study the following problem.
\begin{center}
\vspace{-1.3em}
\fbox{\hspace{-1mm}\begin{tabular}{ll}
 \multicolumn{2}{l}{\ext{$\calC$}} \\\hline 
 \\[-1ex]
  {\small INPUT}:&\hspace{-.7em}A pair $(\str A, \str B) \in \calP$ of
  	structures, a structure $\str C$,
  	and a homomorphism $h \colon \str A \ra \str C$. \\
  {\small QUESTION}:&\hspace{-.5em}Exists a homomorphism
  	$h' \colon \str B \ra \str C$ compatible with $h$? \\
\end{tabular}\hspace{-1mm}}
\end{center}
The problem \pExt{$\calC$} is the problem \ext{$\calC$} parameterized
by the size of $(\str A, \str B)$.

As mentioned earlier, the main difference between $\Hom$ and $\extp$
is that in addition to a structure, the input of $\extp$ gets another
structure and a homomorphism that is already guaranteed to map this
additional structure into the target. When looking for tractable 
classes, this additional input has to be taken into account.

To do so, we introduce the idea of the \emph{extension core}.
Towards its definition, for a set of elements $A$,
let $\str S_A$ be the
$\{R_a \mid a \in A\}$-structure 
(where each $R_a$ is a unique relation symbol) with 
$\idom(\str S_A) = A$ and $R_a^{\str S_A} = \{(a)\}$.

\begin{definition}[(Extension Core)]\label{def:extcore}
  Let $(\str A,\str B)$ be a pair of structures.
  The \emph{extension core $\extcore(\str A,\str B)$} 
  is the structure 
  $\extcore(\str A,\str B) = (\icore(\str A
  \cup \str B \cup \str S_{\idom(\str A)}) \setminus \str S_{\idom(\str A)})
  \setminus \idom(\str A)$.
\end{definition}
Said differently, the extension core is constructed 
by introducing a new relation for every domain element in
$\str A$ and then computing the core, the extension core accounts
on the one hand for the possibility that parts of $\str B$
might be folded into $\str A$ (and thus the extension of the
homomorphism to these parts is guaranteed), and on the other
hand for the fact that the mapping on $\idom(\str A)$ is fixed.
Removing $\idom(\str A)$ is then possible because the mapping
is already provided for these values.

The notion of the extension core allows us to formulate an exact
characterization of the tractable classes $\calC$ of the extension
problem \ext{$\calC$}. To this end, we define the treewidth of 
$\extcore(\calC)$ to be the maximum of the treewidth of $\extcore(\str A, \str B)$ 
for $(\str A, \str B)\in \calC$ if this maximum exists and $\infty$ otherwise.

\begin{theorem}\label{theo:extdichotomy}
 Assume that $\FPT \neq \wone$ and let $\calC$ be a
 decidable class of pairs of structures.
 Then the following statements are equivalent:
 \begin{enumerate}
  \item The treewidth of $\extcore(\calC)$ is bounded by a constant.
  \item The problem \ext{$\calC$} is in $\ptime$.
  \item The problem \pExt{$\calC$} is in $\FPT$.
 \end{enumerate}
\end{theorem}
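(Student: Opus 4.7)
The plan is to establish the cycle of implications $(1) \Rightarrow (2) \Rightarrow (3) \Rightarrow (1)$. The implication $(2) \Rightarrow (3)$ is immediate since $\ptime \subseteq \FPT$. The substance lies in the algorithmic direction $(1) \Rightarrow (2)$ and the hardness direction $(3) \Rightarrow (1)$, for which I would use Theorem~\ref{theo:grohe} as a black box.

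For $(1) \Rightarrow (2)$, I would first reduce $\ext{\calC}$ to the plain homomorphism problem via the projection-under-$h$ construction from the appendix (Observation~\ref{obs:projectionhelps}). Given an instance $(\str A, \str B, \str C, h)$, this says that $h$ extends if and only if a certain projected pair $(\str Q', \str D')$ admits a homomorphism, where $\str Q'$ is obtained from $\str A \cup \str B$ by removing tuples already satisfied by $h$ and absorbing the $h$-values into the relation symbols. The key observation is that, modulo renaming the annotations from $h(\idom(\str A))$ back to $\idom(\str A)$, the structure $\str Q'$ is homomorphically equivalent to $\extcore(\str A, \str B)$: passing through $\icore(\str A \cup \str B \cup \str S_{\idom(\str A)})$ precisely folds those parts of $\str B$ that can be accommodated inside $\str A$ without violating the fixed assignment, which is exactly what the core of $\str Q'$ also does. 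Under the assumption that the treewidth of $\extcore(\str A, \str B)$ is bounded by a constant $k$, the core of $\str Q'$ therefore has treewidth at most $k$, and the standard existential $k$-pebble game algorithm for the homomorphism problem decides $\str Q' \to \str D'$ in polynomial time without ever explicitly constructing the core.

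For $(3) \Rightarrow (1)$, I would argue by contraposition: if the treewidth of $\extcore(\calC)$ is unbounded, then $\pExt{\calC}$ is $\wone$-hard. The crucial technical ingredient is a lemma asserting that $\extcore(\str A, \str B)$ is itself a core in its enlarged relational vocabulary. To prove it, I would extend any endomorphism $\phi$ of $\extcore(\str A, \str B)$ by the identity on $\idom(\str A)$ to a self-map $\phi'$ of $\icore(\str A \cup \str B \cup \str S_{\idom(\str A)})$; the annotations produced by the projection operation in Definition~\ref{def:extcore} force $\phi$ to respect the frozen $\idom(\str A)$-values, and the $\str S_{\idom(\str A)}$-atoms ensure that the identity extension is compatible with the $R_a$-relations, so $\phi'$ is an endomorphism of a core and hence an automorphism, making $\phi$ itself a bijection. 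Given this lemma, the class $\mathcal{E} := \{\extcore(\str A, \str B) \mid (\str A, \str B) \in \calC\}$ is a class of cores with unbounded treewidth, and Theorem~\ref{theo:grohe} (transferred from $\pcqevalp$ to $\pHom$ via the standard canonical-structure correspondence) yields that $\pHom(\mathcal{E})$ is $\wone$-hard. To complete the proof I would give a many-one $\FPT$-reduction from $\pHom(\mathcal{E})$ to $\pExt{\calC}$: given $(\extcore(\str A, \str B), \str T)$, construct the extension instance $(\str A, \str B, \str C, h)$ where $\str C$ is the disjoint union of a fresh isomorphic copy of $\str A$ and a ``decoded'' copy of $\str T$---each annotated tuple of $\str T$ being expanded back into a full tuple by reinserting the $\idom(\str A)$-values at the marked positions using the elements of the fresh $\str A$-copy---and $h$ is the canonical isomorphism from $\str A$ to its copy in $\str C$. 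Any extension of $h$ must route the part of $\str B$ that folds into $\str A$ into the $\str A$-copy and the remaining part, namely $\extcore(\str A, \str B)$, into the decoded $\str T$; these routings are in bijection with homomorphisms $\extcore(\str A, \str B) \to \str T$.

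The main obstacle I anticipate is establishing the core lemma used in the hardness direction, which requires carefully unraveling the interaction of the three operations in Definition~\ref{def:extcore} (taking the core, removing $\str S_{\idom(\str A)}$, and the annotated projection of $\idom(\str A)$). A parallel subtlety appears in the algorithmic direction, where I need the homomorphic equivalence of $\str Q'$ and $\extcore(\str A, \str B)$ to transfer the treewidth bound from the extension core to the core of $\str Q'$. Once these two structural statements are in place, both directions reduce cleanly to well-known ingredients: bounded-treewidth homomorphism algorithms on one side and Grohe's dichotomy on the other.
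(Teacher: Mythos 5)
Your implications $(1)\Rightarrow(2)\Rightarrow(3)$ are essentially the paper's argument (reduce to a homomorphism instance via the projection-under-$h$ construction and invoke bounded-treewidth algorithms), and the ``core lemma'' you worry about is true and unproblematic: it is exactly the paper's observation that $\icore(\extcore(\str A,\str B))=\extcore(\str A,\str B)$, proved via the fact that the singleton relations $R_a$ freeze $\idom(\str A)$ pointwise. The genuine gap is in your reduction for $(3)\Rightarrow(1)$. Your target structure $\str C$ (fresh copy of $\str A$ plus the ``decoded'' $\str T$) does not force extensions of $h$ to send the extension-core part of $\str B$ into the decoded $\str T$: an extension may map an extension-core element onto an element of the $\str A$-copy and then satisfy its tuples through the \emph{annotated} decoded relations rather than the genuine ones, so the existence of an extension does not imply a homomorphism $\extcore(\str A,\str B)\to\str T$. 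Concretely, let $\idom(\str A)=\{a\}$ with no tuples, and let $\str B$ have domain $\{a,b,c,d\}$ with tuples $E(a,c)$, $G(a,c)$, $E(b,d)$, $G(b,d)$, $M(d)$. Since folding $b\mapsto a$ would force $d\mapsto c$ and fail on $M$, nothing folds, and $\extcore(\str A,\str B)$ has domain $\{b,c,d\}$ with $E^{1=a}=G^{1=a}=\{(c)\}$, $E=G=\{(b,d)\}$, $M=\{(d)\}$. Take $\str T$ with $\idom(\str T)=\{t\}$, $(E^{1=a})^{\str T}=(G^{1=a})^{\str T}=M^{\str T}=\{(t)\}$ and $E^{\str T}=G^{\str T}=\emptyset$; there is no homomorphism $\extcore(\str A,\str B)\to\str T$ because $(b,d)$ has nowhere to go in the empty binary relations. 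Yet your $\str C$ consists of $\{a',t\}$ with tuples $E(a',t)$, $G(a',t)$, $M(t)$, and the map $b\mapsto a'$, $c\mapsto t$, $d\mapsto t$ is a valid extension of $h$. So the reduction answers ``yes'' on a ``no''-instance, and the claimed bijection between routings and homomorphisms fails.

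This impersonation problem is precisely what the paper's construction is built to exclude: there $\str D$ is a product with domain $\idom(\str T)\times\idom(\str S)$, where $\str S=\icore(\str A\cup\str B\cup\str S_{\idom(\str A)})\setminus\str S_{\idom(\str A)}$, and Lemma~\ref{lem:hominSBij} (every endomorphism of the core $\str S$ is bijective) guarantees that the second coordinate of any extension is an automorphism of $\str S$; after composing with a power of that automorphism one may assume it is the identity, and only then does the first coordinate yield the desired homomorphism into $\str T$. Without some mechanism of this kind --- tracking where each element of $\str B$ ``thinks'' it sits inside $\str S$ --- your disjoint-union-with-decoding target cannot certify that extension-core elements land in the $\str T$-part, and the hardness direction does not go through as written.
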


The theorem is shown using a sequence of lemmas. However,
before the first lemma, we state an easy but important
observation that we use tacitly throughout this section.
\begin{observation}
	\begin{itemize}
		\item Extension cores are unique up to isomorphism. 
		\item For any two structures $\str A, \str B$, we have
		$\icore(\extcore(\str A, \str B)) = \extcore(\str A, \str B)$
	\end{itemize}
\end{observation}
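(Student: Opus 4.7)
The plan is to reduce both statements to properties of the intermediate structure $\str L := \icore(\str A \cup \str B \cup \str S_{\idom(\str A)})$ and of the way the two removal operations from Definition~\ref{def:extcore} interact with it. The common observation is that each $a \in \idom(\str A)$ is pinned down in $\str L$ by the singleton relation $R_a$: since $R_a$ appears as $\{(a)\}$ in $\str S_{\idom(\str A)}$ and $\str L$ is a substructure of $\str A \cup \str B \cup \str S_{\idom(\str A)}$, one obtains $R_a^{\str L} = \{(a)\}$. Hence every endomorphism of $\str L$, and in particular every isomorphism between two candidate cores, must fix each element of $\idom(\str A)$ pointwise.

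The first bullet then follows routinely from the preliminaries' uniqueness of cores up to isomorphism. Given two choices $\str L_1, \str L_2$ of core with isomorphism $f\colon \str L_1 \to \str L_2$, the observation above shows $f$ is the identity on $\idom(\str A)$. Since the operations $\setminus \str S_{\idom(\str A)}$ and $\setminus \idom(\str A)$ depend only on the input structure and on the fixed set $\idom(\str A)$, the restriction of $f$ to $\idom(\str L_1)\setminus \idom(\str A)$ will yield an isomorphism between the two resulting extension cores.

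The main work, and the step I expect to be the main obstacle, is the second bullet: showing that $\extcore(\str A,\str B)$ is itself a core. I will argue by contradiction. Assume $g\colon \extcore(\str A,\str B) \to \extcore(\str A,\str B)$ is a homomorphism whose image is a proper substructure, and lift it to $\bar g\colon \idom(\str L) \to \idom(\str L)$ by setting $\bar g(a) = a$ for $a \in \idom(\str A)$ and $\bar g(x) = g(x)$ for $x \in \idom(\str L)\setminus \idom(\str A)$. The aim is to show $\bar g$ is an endomorphism of $\str L$ whose image is a proper substructure, contradicting the fact that $\str L$ is a core.

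The delicate part of that argument is verifying that $\bar g$ preserves every relation of $\str L$. The $R_a$ relations are immediate because $\bar g$ fixes $\idom(\str A)$. For any other tuple $(x_1,\ldots,x_k) \in R^{\str L}$, I will unpack the projection $\setminus \idom(\str A)$: writing $i_1,\ldots,i_m$ for the positions with $x_{i_j} \in \idom(\str A)$ and $o_1,\ldots,o_\ell$ for the remaining positions, the projection places $(x_{o_1},\ldots,x_{o_\ell})$ in the derived relation $R^{i_1 = x_{i_1},\ldots,i_m = x_{i_m}}$ of $\extcore(\str A,\str B)$; applying $g$ and then the definition of the projection in the reverse direction gives $(\bar g(x_1),\ldots,\bar g(x_k)) \in R^{\str L}$. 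Finally, since $\bar g$ is the identity on $\idom(\str A)$ while every element and every tuple of $\str L$ outside $\idom(\str A)$ can only be hit via $g$, any domain element or derived-relation tuple missed by $g$ corresponds to a domain element or tuple of $\str L$ missed by $\bar g$, yielding the required contradiction.
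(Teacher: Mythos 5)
Your proof is correct, and the paper itself offers no argument here: the statement is asserted as ``an easy but important observation'' with no proof, so there is nothing to diverge from. Your two key points --- that the singleton relations $R_a$ force $R_a^{\str L}=\{(a)\}$ in the core $\str L=\icore(\str A\cup\str B\cup\str S_{\idom(\str A)})$, hence every endomorphism/isomorphism fixes $\idom(\str A)$ pointwise and therefore commutes with the two removal operations --- are exactly the mechanism the construction is designed around (the same ``core endomorphisms are automorphisms'' fact appears in the paper's Lemma~\ref{lem:hominSBij}), and your lifting argument for the second bullet correctly uses that the derived relation names $R^{i_1=a_{i_1},\dots}$ record the fixed $\idom(\str A)$-values, so a proper retraction of the extension core would lift to a proper retraction of $\str L$.
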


The first result describes a crucial property of extension cores that
will be used several times throughout the remainder of this section.
\begin{lemma}\label{lem:extcoreEnough}
	An instance $(\str A, \str B), \str D, h$ of $\extp$ is
	a positive instance of $\extp$ if and only if there exists
	a homomorphism $h' \colon (\str A \cup \str S) \ra \str D$ that
	extends $h$, where 
	$\str S$ is the structure
	$\icore(\str A \cup \str B \cup \str S_{\idom(\str A)}) \setminus 
	\str S_{\idom(\str A)}$
	from the definition of extension cores.	
\end{lemma}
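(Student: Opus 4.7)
The plan is to leverage the combinatorial role of the singleton relations $R_a$ inside $\str S_{\idom(\str A)}$: they ``pin down'' the elements of $\idom(\str A)$, in the sense that every endomorphism of $\str L := \str A \cup \str B \cup \str S_{\idom(\str A)}$ must fix $\idom(\str A)$ pointwise, since $a$ is the unique element of $R_a^{\str L}$. Consequently any retraction $r \colon \str L \ra \icore(\str L)$ acts as the identity on $\idom(\str A)$, we have $\idom(\str A) \subseteq \idom(\icore(\str L))$, and every tuple of $\str A$ (whose entries all lie in $\idom(\str A)$) survives in $\icore(\str L)$. Stripping away the $R_a$-atoms then yields the inclusions $\str A \subseteq \str S \subseteq \str A \cup \str B$ when everything is viewed in the original vocabulary of $\str A$ and $\str B$; in particular $\str A \cup \str S = \str S$. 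I would record this as the first step.

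For the ``only if'' direction, given a homomorphism $h'' \colon \str A \cup \str B \ra \str D$ that witnesses the positive instance, I would simply restrict $h''$ to $\idom(\str S)$. Because $\str S$ is a substructure of $\str A \cup \str B$ by the observation above, this restriction is a homomorphism $\str A \cup \str S \ra \str D$, and it still extends $h$ thanks to $\idom(\str A) \subseteq \idom(\str S)$.

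For the ``if'' direction, let $h' \colon \str A \cup \str S \ra \str D$ be a homomorphism extending $h$. I would pick a retraction $r \colon \str L \ra \icore(\str L)$ and consider its restriction $r'$ to $\str A \cup \str B$, interpreted in the vocabulary without the $R_a$'s. Since $r$ preserves atoms and no atom of $\str A \cup \str B$ involves any $R_a$, the image of $r'$ lies in $\str S$, hence $r' \colon \str A \cup \str B \ra \str S$. The composition $h' \circ r' \colon \str A \cup \str B \ra \str D$ is then a homomorphism, and it agrees with $h$ on $\idom(\str A)$ because $r'$ fixes $\idom(\str A)$ pointwise and $h'$ itself extends $h$.

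The main obstacle I expect is bookkeeping across the three vocabularies involved: the original one of $\str A \cup \str B$, the enlarged one of $\str L$ that contains the singleton relations $R_a$, and the reduced one of $\str S$ obtained by discarding $\str S_{\idom(\str A)}$. Checking that $r'$ really lands in $\str S$, rather than merely in $\icore(\str L)$, is the nontrivial bit; but since the only relations stripped off are the $R_a$'s and they never appear in $\str A \cup \str B$ to begin with, nothing relevant is lost. Once this bookkeeping is carefully stated, both directions reduce to one-line arguments.
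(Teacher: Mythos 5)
Your proof is correct and follows essentially the same route as the paper: the paper's (very terse) argument reformulates the instance as a homomorphism problem in which the singleton relations $R_a$ pin the elements of $\idom(\str A)$ and then appeals to the hom-equivalence of a structure with its core, which is exactly the retraction argument you carry out explicitly. Your version merely unpacks the details the paper leaves implicit (that the retraction fixes $\idom(\str A)$ pointwise, that $\str A \subseteq \str S \subseteq \str A \cup \str B$, and the vocabulary bookkeeping), and all of these checks are sound.
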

\begin{proof}
 Solving the instance $(\str A, \str B), \str D, h$ of $\extp$ is equivalent
 to solving the instance
 $((\str A \cup \str B \cup \str S_{\idom(\str A)}), 
 	(\str D \cup h(\str S_{\idom(\str A)}))$ of $\Hom$ 
 	(where $h(\str S_{\idom(\str A)})$ denotes the structure 
 	$\str S_{\idom(\str A)}$ where all elements $a \in \idom(h)$
 	are replaced by $h(a)$)
 which in turn is equivalent to deciding the existence of 
 $h' \colon (\str A \cup \str S) \ra \str D$ extending $h$.
\end{proof}
Next, we show the positive result, i.e.\ that the problem \ext{$\calC$}
can be solved efficiently if the treewidth of the extension cores in $\calC$
is bound.
\begin{lemmarep}\label{lem:exttwpositive}
 Let $\calC$ be a class of pairs of structures such that the treewidth of
 $\extcore(\str A, \str B)$ for all $(\str A, \str B) \in \calC$ is bounded
 by some constant $c$. Then \ext{$\calC$} is in $\ptime$.
\end{lemmarep}
\begin{inlineproof}[Proof (sketch)]
Given an instance $(\str A, \str B), \str D, h$ of \ext{$\calC$}, we know
from Lemma~\ref{lem:extcoreEnough} that we can equivalently solve the
problem whether there is an extension $h' \colon (\str A \cup \str S)
\ra \str D$ of $h$. This in turn can be shown to be equivalent to 
deciding an instance $(\str L, \str T)$ of $\Hom$, that we can derive
from $((\str A \cup \str S), \str D)$ (intuitively by replacing ``variables''
in $(\str A \cup \str S)$ according to $h$ by ``constants'' from
$\idom(\str D)$ and then reducing the resulting structure).
It can then be shown that when applying the same transformation
(replacing elements according to $h$ and reducing the structure)
to the pair $(\extcore(\str A, \str B), \str D)$, the resulting
structure is identical to $(\str L, \str T)$. Since 
$\extcore(\str A, \str B)$ has bounded treewidth and this transformation
does not increase the treewidth, it follows that the treewidth of
$\str L$ is bounded, and thus the instance $(\str L, \str T)$ can 
be solve in polynomial time \cite{DalmauKV2002}.
Finally, since all transformations can be computed in polynomial time as well,
we get the desired result.
\end{inlineproof}
\begin{proof}
 Let $(\str A, \str B) \in \calC$, $\str D$, and $h$ be an instance of
 \ext{$\calP$}. By Lemma~\ref{lem:extcoreEnough}, this problem is equivalent
 to asking for the existence of a homomorphism $h' \colon (\str A \cup \str S) \ra
 \str D$ that extends $h$, where $\str S = \icore(\str A \cup \str B \cup 
 \str S_A) \setminus \str S_A$ and $A = \idom(\str A)$.
 By Observation~\ref{obs:projectionhelps},
 this is equivalent to the instance $((\str A \cup \str S)', \str D')$ of
 $\Hom$, where  $((\str A \cup \str S)', \str D')$ is the projection of
 $((\str A \cup \str S), \str D)$ under $h$.

 For $\str E = \extcore(\str A, \str B)$ and $\str F = \str D$, we next show that
 $((\str A \cup \str S)', \str D') = (\str E', \str F')$ where 
 $(\str E', \str F')$ is the projection of $(\str E, \str F)$ under $h$.
 First of all, observe that $\str D' = \str F'$ does not necessarily holds,
 since it depends on the result of the projections of the left hand side.
 However, if the left hand side coincide, the equality $\str D' = \str F'$
 obviously holds. 

 We thus show that $(\str A \cup \str S)' = \str E'$. First of all, we have
 that 
 $\str A \cup \str S = \str A \cup ( \icore(\str A \cup \str B \cup \str S_{A})
 \setminus \str S_A) \subseteq (A \cup B)$. Moreover, since $h \colon \str A
 \ra \str D$, by the first case in the case distinction if the definition
 of the projection of a pair of structures under a homomorphism, 
 $(\str A \cup \str S)'$ does not contain any relation derived from $\str A$.
 It is thus safe to conclude that $(\str A \cup \str S)' = \str S'$, where
 $(\str S', \str D')$ is the projection of $(\str S, \str D)$ under $h$.

 Next, recall that
 $\str S = \icore(\str A \cup \str B \cup \str S_{A}) \setminus \str S_A$,
 and that
 $\extcore(\str A, \str B) = \str S \setminus A$. Thus the only difference
 between $\str E' = (\str S \setminus A)'$ and 
 $(\str A \cup \str S)' = \str S'$ is that in the first case, the projection
 of $\str S$ under $A$ is computed, before the projection under $h$.
 However, since $\idom(h) = \idom(\str A) = A$, it can be easily checked
 that this results in the same structures, and thus
 $\str E' = (\str A \cup \str S)'$. 

 Now the treewidth of $\str E$ is bounded by $c$, and therefore also the
 treewidth of $\str E'$ (taking subgraphs does not increase the treewidth).
 As a result, the existence of a homomorphism $\str E' \ra \str F'$ can
 be decided in polynomial time \cite{Grohe2007}, which proves the lemma.
\end{proof}

The next result shows that the above lemma is optimal by using the characterization
of tractable classes
(for both, $\ptime$ and $\FPT$) of $\pHom(\calC)$ provided in \cite{Grohe2007}.
\begin{toappendix}
\begin{lemma}\label{lem:hominSBij}
	Let $\str A$ and $\str B$ be structures and let $\str S$
	be the structure $\icore(\str A \cup \str B \cup \str S_A)$
	from the definition of the extension core. If $h$ is a homomorphism
	from $\str S$ to itself, then $h$ is bijective.
\end{lemma}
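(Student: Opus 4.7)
The statement is essentially the well-known fact that every endomorphism of a core is an automorphism: since $\str S$ is by definition of the form $\icore(\cdot)$, it is itself a core, and the conclusion follows by a standard argument that I plan to carry out explicitly in the notation of the paper. Write $\str M := \str A \cup \str B \cup \str S_{\idom(\str A)}$, so that $\str S = \icore(\str M)$.

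First, I would unpack the definition of $\icore$ to record two facts: (i) $\str S$ is a substructure of $\str M$ and there exists a homomorphism $g \colon \str M \to \str S$; and (ii) by minimality, no proper substructure of $\str M$ that is strictly contained in $\str S$ admits a homomorphism from $\str M$. Second, given an arbitrary endomorphism $h \colon \str S \to \str S$, I would consider the image substructure $h(\str S)$. Since $h(\str S)$ is a substructure of $\str S$ and the substructure relation is transitive, $h(\str S)$ is also a substructure of $\str M$, and composing with $g$ yields a homomorphism $h \circ g \colon \str M \to h(\str S)$.

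Third, applying the minimality property (ii) to $h(\str S)$ forces $h(\str S) = \str S$ as substructures of $\str M$; in particular, their underlying universes coincide, so $h$ is a surjection of $\idom(\str S)$ onto itself. Since $\idom(\str S)$ is finite, any such surjection is a bijection, which gives the claim.

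The only delicate point I anticipate is making sure that the minimality property is applied with respect to the correct ambient structure $\str M$, and that $h(\str S)$ is recognised as a substructure of $\str M$ via transitivity of the substructure relation; once this is in place, the rest is routine bookkeeping.
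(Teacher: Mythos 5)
Your proposal is correct and follows the same route as the paper, which simply observes that $\str S = \icore(\str A \cup \str B \cup \str S_{\idom(\str A)})$ is a core and invokes the standard fact that every endomorphism of a (finite) core is an automorphism. You merely unfold that standard fact — image substructure, composition with the retraction $g$, inclusion-minimality, finiteness — and the unfolding is sound.
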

\begin{proof}
 Since $\str S$ is a core, any homomorphism from $\str S$ to
 itself is an isomorphism.
\end{proof}
\end{toappendix}

\begin{lemmarep}\label{lem:homToExt}
	Let $\calC$ be a decidable class of pairs of structures
	and let $\extcore(\calC)$ be the class of extension cores of the pairs
	in $\calC$. 
	Then $\pHom(\extcore(\calC))$ $\leq_\FPT$ \pExt{$\calC$}.
\end{lemmarep}
\begin{inlineproof}[Proof (idea)]
 Given an instance $(\str L, \str R)$ of $\pHom(\extcore(\calC))$, first
 compute a pair $(\str A, \str B) \in \calC$ such that
 $\extcore(\str A, \str B) = \str L$. 
 Next, we define a structure $\str D$ and a homomorphism $h \colon \str A \ra \str D$
 such that for $\str S = \icore(\str A \cup \str B \cup \str S_{\idom(A)}) \setminus 
 \str S_{\idom(A)}$ the following holds: There exists a homomorphism
 $h' \colon (\str A \cup \str S)
 \ra \str D$ extending $h$ if and only if there exists a homomorphism
 $\bar h \colon \str L \ra \str R$.
 By Lemma~\ref{lem:extcoreEnough} this implies that $(\str A, \str B), \str D, h$
 is a positive instance of \ext{$\calC$} if and only if $(\str L, \str R)$ is
 a positive instance of $\pHom(\extcore(\calC))$ and thus proves the case.
\end{inlineproof}
\begin{proof}
	Let $\calC$ be a decidable class of pairs of structures,
	and let $(\str L, \str T)$ be an instance of $\pHom(\extcore(\calC))$.
	We reduce this problem to an instance of \pExt{$\calP$}.

	In a first step, we compute a pair $(\str A, \str B) \in \calC$ such
	that $\extcore(\str A, \str B) = \str L$. We will define a structure
	$\str D$ and homomorphism $h \colon \str A \ra \str D$ such that
	there exists a homomorphism $h' \colon (\str A \cup \str B) \ra \str D$
	that is an extension of $h$ if and only if there exists a homomorphism
	from $\str L$ to $\str T$. However, we will define $\str D$ and 
	homomorphism $h$ in a slightly different setting.

	Let $\str S$ be the structure 
	$\icore(\str A \cup \str B \cup \str S_{A}) \setminus \str S_A$
	from the definition of extension cores where $A = \idom(\str A)$.
	By Lemma~\ref{lem:extcoreEnough}, the desired extension $h'$ of $h$
	exists if and only if there exists a homomorphism $h'' \colon
	(\str A \cup \str S) \ra \str D$ that extends $h$. Observe that the
	structure $\str D$ and homomorphism $h$ are still the same as above.
	We will thus work in the latter setting with $\str S$ instead of
	$\str B$ as this turns out to be easier.

	We define the structure $\str D$ over the same vocabulary as $\str S$ as follows:
	\begin{itemize}
		\item The domain $\idom(\str D) = \idom(\str T) \times \idom(\str S)$,
			i.e.\ the elements represent pairs of elements from
			$\str T$ and $\str S$, respectively. 

		\item For each relation symbol $R$ of arity $k$, and every
			tuple $(a_1, \dots, a_k) \in R^{\str S}$, the relation
			$R^{\str D}$ contains the following tuples:

			Let $\{i_1, \dots, i_{\ell}\} \subseteq \{1, \dots, k\}$
			be all those positions of $(a_1, \dots, a_k)$ such that
			$a_{i_j} \in \idom(\str A)$, and let $\{o_1, \dots, o_m\}
			= \{1, \dots, k\} \setminus \{i_1, \dots, i_{\ell}\}$ be
			all those positions such that $a_{o_j} \notin \idom(\str A)$,
			i.e.\ $a_{o_j} \in \idom(\str B) \setminus \idom(\str A)$.
			Let furthermore $R'$ be the relation symbol derived for
			$(a_1, \dots, a_k) \in R^{\str S}$ when computing the
			projection
			$\str S \setminus \idom(\str A) = \extcore(\str A, \str B)$.

			Now, for every pair $(\vec d_1, \vec d_2)$ of tuples 
			$\vec d_1 = (d_{o_1}, \dots, d_{o_m}) \in (R')^{\str T}$
			and 
			$\vec d_2 = (d_{i_1}, \dots, d_{i_\ell}) \in \idom(\str T)^{\ell}$,
			we add the tuple $((d_1, a_1), \dots, (d_k,a_k))$ to $R^{\str D}$.
			(Observe that by slight abuse of notation, in order to
			simplify the description we denote the positions in
			$\vec d_1$ and $\vec d_2$ according to the position
			in $R$ they originate from.) 
			Thus, intuitively, we replace all domain elements from 
			$\idom(\str A)$ with all possible combinations of elements from
			$\idom(\str T)$.
			
			These are all the tuples in $\str D$.
	\end{itemize}
	It is worth pointing out that in case $R'$ is not part of the vocabulary
	of $\str T$ or $(R')^{\str T}$ is empty, then by this definition $R^{\str D}$
	is the empty relation. The resulting instance will therefore be a simple 
	``no'' instance, because $R^{\str S}$ is non-empty. However, in this case
	we also have that $(R')^{\str L}$ is nonempty, and therefore also $(\str L, \str T)$
	is a trivial ``no' instance.
	
	Finally, we define the mapping $h \colon \idom(\str A) \ra \idom(\str D)$ as
	$h(a) = (d,a)$ for some arbitrary but fixed element $d \in \idom(\str T)$.

	It remains to prove that there indeed exists a homomorphism
	$g \colon \str L \ra \str T$ if and only if $h$ can be extended
	to a homomorphism
	$h' \colon \str S \ra \str D$.

	First assume that $g$ exists. Then define an extension $h'$ of $h$ 
	to $\idom(\str S)$ as $h'(a) = (g(a),a)$ for all
	$a \in \idom(\str S) \setminus \idom(\str A)$. The mapping $g$ is
	indeed defined on all these elements, since 
	$\idom(\str S) \setminus \idom(\str A) = \idom(\extcore(\str A, \str B))
	= \idom(\str L)$ because of $\extcore(\str A, \str B) = \str L)$.
	For $a \in \idom(\str A)$ we need not define $h'$ since $h$ is already
	defined on these elements, and $h'$ extends $h$.
	It now follows immediately from the construction of $\str D$ that $h'$
	is indeed the required homomorphism.

	For the other direction, assume that $h'$ exists. First, observe that
	$\str D$ projected onto the second component of its domain elements
	gives $\str S$. Thus $h'$ is a bijection in this second coordinate by
	Lemma~\ref{lem:hominSBij}. Let $\pi_2$ be the projection to the
	second coordinate. Then $\pi_2 \circ h$ is an automorphism of $\str S$,
	and thus there is a $n \in \mathbb{N}$ such that $(\pi_2 \circ h)^n = id$
	(where $id$ denotes the identity mapping).
	Consequently, w.l.o.g.\ we assume that $\pi_2 \circ h = id$.
	For every $a \in \idom(\str L) = \idom(\extcore(\str A, \str B))$ define
	$g(a)$ to be the value $d$ such that $h'(a) = (d,a)$. Then again by
	definition of $\str D$ it follows immediately that for all 
	relation symbols $R$ and tuples $\vec{a} \in R^{\str L}$ we have
	$g(\vec a) \in R^{\str T}$.

	Observing that all constructions can be done efficiently completes
	the proof.
\end{proof}

Theorem~\ref{theo:extdichotomy} now follows immediately. (1) $\Rightarrow$ (2)
follows from Lemma~\ref{lem:exttwpositive}. 
The implication (2) $\Rightarrow$ (3) follows immediately.
Finally, if the treewidth of $\extcore(\calP)$ is not bounded, then
$\pHom(\extcore(C))$ is not in $\FPT$ by \cite{Grohe2007}.
Thus, by Lemma~\ref{lem:homToExt}, the problem \pExt{$\calP$}
is not in $\FPT$, which shows (3) $\Rightarrow$ (1).

We will make use of $\extp$ and extension cores throughout the paper, but usually
for sets of atoms. In this case, we implicitly assume their common representation
as structures.

\begin{toappendix}
\subsection{Relationship to CQs}
Since it will turn out to be a useful tool in Section~\ref{sec:withproj},
and to substantiate our claim that $\extp$ is an interesting problem
on its own right, we observe the following relationship between
$\extp$ and \cqevalp which is reminiscent
of the relationship between the problems $\Hom$ and the evaluation problem
for Boolean \cqs.
For a \cq $q = \anspred(\vec x) \leftarrow \calB$, let
$\extcq(q) = (\{\anspred(\vec x)\}, \calB)$.
Furthermore, for
a class $\calQ$ of \cqs let $\extcq(\calQ) = \bigcup_{q\in \calQ} \extcq(q)$.
Then the following immediate corollary of Theorem~\ref{theo:extdichotomy}
provides a complete characterization via the treewidth of the extension
core of all tractable classes of \cqevalp\ over schemas with bounded arity
but an unbounded number of free variables.
\begin{corollary}\label{cor:nonBCQdichotomy}
 For every recursively enumerable class $\calQ$ of \cqs, the problems
 \cqeval{$\calQ$} and \ext{$\extcq(\calQ)$} are equivalent under many-one
 reductions.
\end{corollary}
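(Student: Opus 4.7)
The plan is to exhibit two many-one reductions, one in each direction, that essentially formalize the observation that pinning down the image of the free variables of a \cq is the same as prescribing the image of a single ``answer atom''. Because $\extcq$ is defined so that the first component of $\extcq(q)$ is exactly the singleton $\{\anspred(\vec x)\}$, the homomorphism $h$ in the \extp-instance will play the role of the candidate answer tuple in the \cqevalp-instance.

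For the direction \cqeval{$\calQ$} $\le$ \ext{$\extcq(\calQ)$}, I would proceed as follows. Given an instance $(q, \strD, \vec v)$ with $q = \anspred(\vec x) \leftarrow \calB$, let $(\str A, \str B) = \extcq(q) = (\{\anspred(\vec x)\}, \calB) \in \extcq(\calQ)$. Take $\str C$ to be the structure obtained from $\strD$ by adding the single tuple $\vec v$ to a fresh relation $\anspred^{\str C}$, and define $h \colon \str A \to \str C$ by $h(x_i) = v_i$. Then $h$ is a well-defined homomorphism because $\anspred(\vec v) \in \str C$ by construction. Since $\calB$ does not contain $\anspred$-atoms, any extension $h' \colon \str B \to \str C$ of $h$ is in fact a homomorphism from $\calB$ to $\strD$ with $h'(\vec x) = \vec v$, and conversely. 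This yields $\vec v \in q(\strD) \iff (\str A, \str B), \str C, h \in \ext{$\extcq(\calQ)$}$.

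For the reverse direction \ext{$\extcq(\calQ)$} $\le$ \cqeval{$\calQ$}, given an instance $((\str A, \str B), \str C, h)$ with $(\str A, \str B) = \extcq(q)$, I would recover $q$ directly from $(\str A, \str B)$: the tuple of free variables $\vec x$ is read off from the unique $\anspred$-atom in $\str A$, and the body $\calB$ is $\str B$. Since $q$ lies in the recursively enumerable class $\calQ$, this recovery is effective (enumerate $\calQ$ and match). Set $\vec v := h(\vec x)$ and $\strD := \str C$. The same argument as above shows that extensions of $h$ to $\str B$ inside $\str C$ are in bijective correspondence with homomorphisms $\calB \to \strD$ sending $\vec x$ to $\vec v$, hence $\vec v \in q(\strD) \iff ((\str A, \str B), \str C, h)$ is a yes-instance of \extp.

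Both reductions are clearly computable in polynomial time, and since $|\extcq(q)| = \Theta(|q|)$ the parameter is preserved up to a constant factor, so both reductions are in fact $\FPT$ many-one reductions. The only subtlety to flag is the implicit requirement that every free variable of $q$ appears in $\calB$ (standard safety of \cqs), so that $\idom(\str A) \subseteq \idom(\str B)$ and the notion of ``extension of $h$'' makes sense; this is inherent to the definition of \cqs used in the paper, so no extra work is needed.
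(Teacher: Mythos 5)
Your proposal is correct and spells out exactly the argument the paper treats as immediate: the head atom $\anspred(\vec x)$ in the first component of $\extcq(q)$ plays the role of the candidate answer tuple, so the two problems are interdefinable by adding/reading off a single $\anspred$-tuple and the partial map $h$ on $\vec x$. The edge cases you flag (safety of $q$, recovering $q$ effectively from $(\str A,\str B)$, preservation of the parameter) are handled appropriately and consistently with how the paper handles analogous points in its other reductions.
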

\end{toappendix}

\section{Projection Free Pattern Trees}\label{sec:noproj}
We start with our investigation of \pts by looking at projection-free
\pts. As already mentioned at the beginning of the previous section,
given a \pt $p = ((T,r), \lambda)$, a database $\strD$, and a mapping
$\mu$, deciding whether $\mu$ is a pp-solution is feasible in
polynomial time. The algorithm could proceeds as follows:
First it identifies the set $N = \{ t \in T \mid \var(t) \subseteq \idom(\mu)
\text{ and } \mu(\tau) \in \strD \text{ for all } \tau \in \lambda(t)\}$.
Now if $r \notin N$, then clearly $\mu \notin p(\strD)$. 
Otherwise let $T'$ be the maximal subtree of $T$ that contains $r$ and
is built exclusively from nodes in $N$.
Then $\mu$ is a pp-solution if and only if $\var(T') = \idom(\mu)$.

Consequently, the $\coNP$-hardness of deciding $\mu \in p(\strD)$ originates
exclusively from testing whether $\mu$ can be extended.
Essentially, the reason for this test being hard is that it can be
the same %
as any homomorphism test.
However, testing the possibility of extending a mapping to a \emph{single}
node being hard does not necessarily make the complete problem of testing
the existence of any extension hard as well, as 
illustrated by the following example.

\begin{example}\label{ex:notAllChildren}
 Consider the \pt $p_2$ from Figure~\ref{fig:relevantNodes}.
 Let $\strD$ be some database
 containing at least $a(1)$ and assume the mapping $\mu = \{(x,1)\}$.
 Then $\mu$ clearly is a pp-solution. In this case testing whether
 $\mu \in p(\strD)$ boils down to deciding whether there exists either
 a mapping $\mu_1 \colon \lambda(t_1) \ra \strD$ or a mapping
 $\mu_2 \colon \lambda(t_2) \ra \strD$.
 Deciding the existence of $\mu_1$ is as hard deciding whether $\strD$
 contains a clique of size $n$. 
 However, observe the homomorphism $h \colon \lambda(t_2) \ra \lambda(t_1)$.
 Thus, whenever $\mu_1$ exists, $\mu_2$ exists as well.
 As a result, testing for the existence of a $\mu_1$ is not necessary.
 Instead, the easy test for $\mu_2$ is sufficient.
\end{example}

We formalize the observation of Example~\ref{ex:notAllChildren} by 
identifying for each subtree of a \pt exactly
those child nodes that potentially have to be tested for an extension.
Let $p = (T, \lambda)$ be a \pt and $T'$ be a subtree
of $T$. To start with, consider a set $C(T')$ of pairs of sets
of atoms 
$C(T') = \{ (\lambda(T_{\prec t_i}'),\lambda(t_i)) \mid t_i \in ch(T')\}$.
From this set of initial candidates, we eliminate redundant pairs as illustrated
in Example~\ref{ex:notAllChildren}.
That is, if there are two pairs $(\calT_i, \lambda(t_i)), (\calT_j, \lambda(t_j))
\in C(T')$ with $t_i \neq t_j$ such that there exists a homomorphism
$h \colon \calT_i \cup \lambda(t_i) \ra \calT_j \cup \lambda(t_j)$ with
$h|_{\var(\calT_i)}(x) = x$ for all $x \in \var(\calT_i)$,
remove $(\calT_j,\lambda(t_j))$ from $C(T')$.
Once there are no more such pairs, we denote the resulting set 
$C(T')$ as $\csts(T', T)$, and refer to its elements as 
\emph{\underline{c}ritical \underline{s}ub\underline{t}ree\underline{s}}.

Note that the construction of $\csts(T')$ is not deterministic because 
in each elimination step, there might be several choices
for elements to remove. Most of these choices lead to the same result,
since the composition of two homomorphisms is again a homomorphism.
However, when there are mutual homomorphisms between two pairs in $C(T')$,
then different choices may lead to different results.
Considering all possible elimination sequences, we thus get a set 
$\CST(T') = \{\csts_1(T'), \dots, \csts_\ell(T')\}$ of sets of critical subtrees.
As we will see, for our purposes all sets of critical subtrees in $\CST(T')$ are equivalent.

The first observation is a direct consequence of the definition of the sets
$\csts_i(T')$.
\begin{observation}\label{obs:cstForEachChild}
 For every child node $t \in ch(T')$ and every $\csts_i(T')$, either there
 is a pair $(\calT, \lambda(t)) \in \csts_i(T')$, or there is a node
 $t' \in ch(T')$ such that $(\calT', \lambda(t')) \in \csts_i(T')$ and
 there exists a homomorphism 
 $h \colon \calT \cup \lambda(t) \ra \calT' \cup \lambda(t')$ such that
 $h|_{\var(\calT)}$ is the identity mapping.
\end{observation}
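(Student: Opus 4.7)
The plan is to track the nondeterministic elimination that produces $\csts_i(T')$ starting from the initial candidate set $C(T')$, and to maintain a suitable invariant across each elimination step. Writing this sequence as $C(T') = C_0 \supset C_1 \supset \cdots \supset C_m = \csts_i(T')$, where each transition drops exactly one pair via the elimination rule, I would prove by induction on $k$ the following invariant: for every child $t \in ch(T')$, either $(\lambda(T'_{\prec t}), \lambda(t)) \in C_k$, or there is some pair $(\calT^\circ, \lambda(t^\circ)) \in C_k$ together with a homomorphism of the kind demanded by the observation.

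The base $k = 0$ is immediate: every child's original pair lies in $C_0$, so it witnesses itself with $h$ being the identity mapping. For the inductive step, suppose the transition from $C_k$ to $C_{k+1}$ removes a pair $\pi_j$ using some remaining pair $\pi_i \in C_k$ via a homomorphism $g$ between their underlying sets of atoms. Children whose witness from $C_k$ is not $\pi_j$ retain that witness unchanged in $C_{k+1}$. For the child corresponding to $\pi_j$ itself, and for any other child whose current witness happened to be $\pi_j$, the new witness is $\pi_i$: the homomorphism required by the invariant is assembled by composing the previous witness homomorphism with $g$, so that the composed map relates the correct pair to an element of $C_{k+1}$ while preserving the identity condition on the preceding context. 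Instantiating the invariant at $k = m$ then yields the statement.

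The main technical point will be verifying that the identity condition on the preceding context survives under this composition. This depends on how the DFS order $\prec$ organises the preceding contexts of the various children of $T'$: because $T'_{\prec t}$ is determined by the position of $t$ in $\prec$, the variable sets on which $h$ must act as the identity nest compatibly, so that the composed homomorphism restricts to the identity exactly where required by the elimination rule. Once this compatibility is confirmed at each step, following the chain of witnesses through the entire elimination sequence produces the pair $(\calT', \lambda(t'))$ and the homomorphism $h$ claimed in the observation, so the statement follows directly from the definition of $\csts_i(T')$ as claimed in the text.
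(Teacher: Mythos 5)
The paper offers no proof of this observation (it is asserted to be ``a direct consequence of the definition''), and your plan --- induction along the elimination sequence $C_0\supset\cdots\supset C_m$, maintaining for each child a surviving witness pair together with a homomorphism --- is the natural way to make it precise. The base case and the bookkeeping for children whose witness is untouched are fine. The problem is the inductive step: the elimination rule produces a homomorphism $g$ \emph{from} the surviving pair $(\calT_i,\lambda(t_i))$ \emph{into} the removed pair $(\calT_j,\lambda(t_j))$ (the pair that is deleted is the \emph{target} of $g$), whereas the observation as written asks for a homomorphism from the removed child's pair $\calT\cup\lambda(t)$ into a surviving pair. So when a child's witness $\pi_j$ is deleted, you hold $h_{\mathrm{old}}\colon(\text{child's atoms})\ra\pi_j$ and $g\colon\pi_i\ra\pi_j$; both point into $\pi_j$ and do not compose to a map into $\pi_i$. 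This is not a fixable presentation issue: taken literally, the statement is false. For $p_2$ of Figure~1 the clique pair is eliminated in favour of the single-edge pair, and there is no homomorphism from $\{a(x)\}\cup\{c(y_i,y_j)\}$ to $\{a(x)\}\cup\{c(z_1,z_2)\}$. What the construction actually yields (and what the correctness argument of Theorem~\ref{theo:pf-evalextFPTequivalent} uses) is the reverse orientation: a surviving $(\calT',\lambda(t'))$ and a homomorphism $h\colon\calT'\cup\lambda(t')\ra\calT\cup\lambda(t)$ with $h|_{\var(\calT')}$ the identity. With that orientation your composition becomes $h_{\mathrm{old}}\circ g\colon\pi_i\ra(\text{child's atoms})$ and the induction goes through in shape.

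Even after reorienting, the point you explicitly defer --- that the composite is still the identity where required --- is not settled by saying the sets $\var(T'_{\prec t})$ ``nest compatibly''. We know $g$ is the identity on $\var(\calT_i)$, which forces only $\var(\calT_i)\subseteq\var(\calT_j)\cup\var(t_j)$, while $h_{\mathrm{old}}$ is guaranteed to be the identity only on $\var(\calT_j)$. If $t_j\prec t_i$, a variable of $\calT_i$ may occur in $t_j$ without occurring in $\calT_j$, and then $h_{\mathrm{old}}\circ g$ need not fix it. You need an explicit argument here, e.g.\ a case distinction on the relative $\prec$-order of $t_i$ and $t_j$ (the case $t_i\prec t_j$ gives $\var(\calT_i)\subseteq\var(\calT_j)$ and is immediate; the other case requires more care). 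As it stands, both the orientation and this composition step are genuine gaps.
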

The next property holds because the only reason for $\CST(T')$ containing
more than one element are mutual homomorphisms between two potential
candidates. This implies that both extension cores have the same treewidth. %
Recall that the treewidth of $\extcore(\csts_i(T'))$ is the maximal treewidth
of $\extcore(\lambda(T_{\prec t}'),\lambda(t))$ over $(\lambda(T_{\prec t}'),\lambda(t))\in \csts_i(T')$.
\begin{propositionrep}
 Let $(T, \lambda, \calX)$ be a \pt, $T'$ a subtree of $T$, and
 $\csts_i(T'), \csts_j(T') \in \CST(T')$.
 Then, for any $c \geq 1$, the treewidth of $\extcore(\csts_i(T'))$
 is less or equal than $c$ if and only if the treewidth of
 $\extcore(\csts_j(T'))$ is less or equal than $c$.
\end{propositionrep}
\begin{proof}
	As already mentioned in the main part of the paper, note that the
	composition of two homomorphisms is again a homomorphism. 
	Next, given the initial set $C(T')$, consider two sequences 
	$\delta_1, \delta_2$ of	deletions of pairs $(\calT_i, \lambda(t_i)$
	such that they result in different critical subtrees 
	$\csts_1(T')$ and $\csts_2(T')$. W.l.o.g.\ assume that there
	is a pair $(\calT_i, \lambda(t_i)) \in \csts_1(T') \setminus 
	\csts_2(T')$. Thus in $C(T')$, there was some 
	$(\calT_j, \lambda(t_j))$ that witnessed the deletion of
	$(\calT_i, \lambda(t_i))$ in the sequence $\delta_2$, i.e.\
	there exists a homomorphism $h \colon \calT_j \cup \lambda(t_j)
	\ra \calT_i \cup \lambda(t_i)$. However, since 
	$(\calT_i, \lambda(t_i))$ cannot be removed from $\csts_1(T')$,
	we have $(\calT_j, \lambda(t_j)) \notin \csts_1(T')$.
	However, the only pair that can witness the deletion of
	$(\calT_j, \lambda(t_j))$ without still witnessing (because
	of the composition of homomorphisms)
	a possible deletion of $(\calT_i, \lambda(t_i))$ is the pair
	$(\calT_j, \lambda(t_j))$ itself. We thus have both, a 
	homomorphism from $(\calT_j, \lambda(t_j))$ to 
	$(\calT_i, \lambda(t_i))$ and vice versa. 
	Since in addition these homomorphisms are the identity 
	on $\idom(\calT_i)$ and $\idom(\calT_j)$, respectively,
	they also give homomorphisms between the two extension cores.
	Thus, by \cite{Grohe2007} both extension cores are isomorphic,
	and therefore have the same treewidth.
\end{proof}
Thus in the following we ignore the ambiguities in the construction of 
the set of critical subtrees and let $\csts(T')$ be the result of an arbitrary run
of the construction algorithm. All our results
are invariant under this choice.
Finally, for a \pt $p = (T, \lambda)$, we define
$\csts(p) = \bigcup_{T' \text{ subtree of } T}\csts(T')$,
 and for a class $\calP$ of pattern trees $\csts(\calP) = \bigcup_{p \in \calP} \csts(p)$.

\begin{theorem}\label{theo:pf-evalextFPTequivalent}
 For every decidable class $\calP$ of projection free \pts,
 the problems \eval{$\calP$} and \ext{$\csts(\calP)$} are equivalent
 under $\FPT$-Turing reductions.
\end{theorem}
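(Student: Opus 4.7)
The plan is to establish two separate $\FPT$-Turing reductions. In both cases preprocessing is either polynomial in the input or depends only on a function of the parameter, and every oracle call carries a parameter bounded by $|p|$, which is what an $\FPT$-Turing reduction needs.

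For $\eval{\calP} \leq_{\FPT}^T \ext{\csts(\calP)}$, the plan is: given $(p, \strD, \mu)$ with $p = (T,\lambda)$, first test in polynomial time whether $\mu$ is a pp-solution (using the procedure sketched at the start of the section), computing in the positive case the unique maximal witness subtree $T_\mu$. Next, precompute $\csts(T_\mu)$, which depends only on $p$, and for every $(\calT_i, \lambda(t_i)) \in \csts(T_\mu)$ query the oracle on $((\calT_i, \lambda(t_i)), \strD, \mu|_{\var(\calT_i)})$. Accept iff all oracle queries return ``no''. The ``only if'' direction of correctness is immediate from Definition~\ref{def:semPTs}: if $\mu\in p(\strD)$, then no child of $T_\mu$ admits an extension of $\mu_{\prec t}$ to $\lambda(t)$, so in particular none of the $t_i$ do. The ``if'' direction uses Observation~\ref{obs:cstForEachChild}: every child $t$ of $T_\mu$ whose pair was discarded is witnessed by a retained pair $(\calT',\lambda(t'))\in \csts(T_\mu)$ through a homomorphism $h:\calT\cup \lambda(t)\to \calT'\cup \lambda(t')$ that is the identity on $\var(\calT)$; composing $h$ with any extension of $\mu$ into $\lambda(t')$ yields an extension into $\lambda(t)$, so ruling out extensions for the retained pairs is enough.

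For the converse direction $\ext{\csts(\calP)} \leq_{\FPT}^T \eval{\calP}$, given an instance $((\str A,\str B), \strD', h)$ I first identify (in time depending only on the parameter) some $p = (T,\lambda) \in \calP$, subtree $T'$ of $T$ and child $t \in ch(T')$ such that $(\str A,\str B) = (\lambda(T'_{\prec t}),\lambda(t))$. I then construct a database $\strD$ and mapping $\mu$ whose membership question in $p(\strD)$ is equivalent to the non-existence of an extension of $h$ to $\str B$. The construction takes $\strD$ to be the disjoint union of $\strD'$ with a canonical realization of $\lambda(T')\setminus \str A$ on a fresh set of elements $\{z_x\}$, one for each variable $x \in \var(T')\setminus \var(\str A)$, and defines $\mu$ to agree with $h$ on $\var(\str A)$ and to send every remaining $x \in \var(T')$ to $z_x$. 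By using fresh relation-unique witness tuples on the $z_x$, the construction is arranged so that (a)~$\mu$ is a pp-solution with $T_\mu = T'$; (b)~no child $t'' \neq t$ of $T'$ and no further child of nodes of $T'$ admits an extension of its $\mu_{\prec t''}$ into $\lambda(t'')$; and (c)~extending $\mu_{\prec t}$ into $\lambda(t)$ inside $\strD$ is equivalent to extending $h$ into $\str B$ inside $\strD'$, because the only tuples of $\strD$ relevant for variables of $\str B$ outside $\str A$ lie in $\strD'$. A single oracle call to $\eval{\calP}$ on $(p,\strD,\mu)$ therefore decides the original instance.

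The main obstacle is this reverse construction: one must ensure simultaneously that $T'$ is the maximal witness subtree, that \emph{no} spurious extension to any other child of $T_\mu$ is possible, and that the only surviving extension test matches the original $\ext$ question on $(\str A,\str B)$ and $\strD'$. Shared variables between $\str A$, $\str B$ and the remaining atoms of $\lambda(T')$ force a case analysis on which elements of $\strD$ can be images of which variables, and when $T'$ does not contain the root of $T$ a canonical padding of the path from the root down to $T'$ is needed, again via fresh elements and relation symbols, so that it contributes to the pp-solution witness without enabling new extensions. The verification that this construction achieves (a)--(c) is essentially a checklist over the finitely many candidate nodes of $T\setminus T'$ and constitutes the bulk of the technical work.
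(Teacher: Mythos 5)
Your first direction (reducing \eval{$\calP$} to \ext{$\csts(\calP)$}) matches the paper's argument essentially verbatim and is fine: test the pp-solution property, compute $T_\mu$ and $\csts(T_\mu)$, query the oracle once per retained pair, and use Observation~\ref{obs:cstForEachChild} plus composition of homomorphisms to cover the eliminated children.

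The converse direction has a genuine gap, and it sits exactly at your point~(b). Taking $\strD$ to be $\str C$ disjointly unioned with a canonical copy of $\lambda(T')\setminus\str A$ gives you no control over the \emph{other} children of $T'$: a child $t''\neq t$ uses relation symbols that may also occur in $\str A\cup\str B$ (these are not simple \pts), so $\str C$ itself can supply tuples witnessing an extension of $\mu_{\prec t''}$ to $\lambda(t'')$. Concretely, let the root be $a(x)$ with two children attaching incomparable gadgets over the same relation $c$ to $x$ (say a triangle and a $2$-cycle), so that both pairs survive in $\csts(T')$; pick the pair for the $2$-cycle child and a target $\str C$ containing a triangle on $h(x)$ but no $2$-cycle. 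Then $h$ does not extend to $\str B$, yet in your $\strD$ the triangle child extends $\mu$, so $\mu\notin p(\strD)$ and the reduction answers wrongly. ``Fresh relation-unique witness tuples'' cannot repair this, because the tuples you must import from $\str C$ to make the test for $t$ meaningful are simultaneously visible to every other child. The paper's proof avoids this with a different construction that is the actual crux of the theorem: it sets $\idom(\strD)=\idom(\str C)\times(\idom(\str A)\cup\idom(\str B))$ with $R^{\strD}=R^{\str C}\times R^{\str A\cup\str B}$ and $\mu=h\times\mathrm{id}$. Any extension of $\mu$ to another child $t''$ then projects, in the second coordinate, to a homomorphism from $\lambda(T'_{\prec t''})\cup\lambda(t'')$ into $\str A\cup\str B$ that is the identity on the interface --- and the existence of such a homomorphism would have eliminated $(\str A,\str B)$ from $\csts(T')$ during the construction of the critical subtrees, a contradiction. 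This is precisely where the definition of $\csts$ is cashed in on the lower-bound side; without the product structure (or an equivalent device) the reduction does not go through. A smaller issue: since your $\mu$ is defined on all of $\var(T')$ rather than only on $\var(\str A)$, the mapping $\mu_{\prec t}$ that must be extended to $\lambda(t)$ can be strictly larger than $h$, so even the surviving test need not coincide with the original \ext{} question; the paper sidesteps this by defining $\mu$ only on $\idom(\str A)$, so that $\mu=\mu_{\prec t}$.
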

\begin{proof}
 The %
 reduction of \eval{$\calP$} to \ext{$\csts(\calP)$}
 is given by the following algorithm to decide $\mu \in p(\strD)$
for a mapping $\mu$, a database $\strD$, and a \pt $p = (T, \lambda)$:
First, decide whether $\mu$ is a pp-solution. If this is the case,
compute $T_\mu$ and $\csts(T_{\mu})$, otherwise return ``no''. 
Second, for every $(\calT, \lambda(t)) \in \csts(T_\mu)$, call an oracle for
$\extp$ on instance the $(\calT, \lambda(t)), \strD, \mu$. If
any of these calls returns ``yes'' $\mu$ is not maximal and thus we answer
``no'', otherwise we return ``yes''. The correctness of this algorithm follows
 from Definition~\ref{def:semPTs}, the discussion on critical subtrees and Observation~\ref{obs:cstForEachChild}.

 Next we show that \ext{$\csts(\calP)$} reduces to \eval{$\calP$}.
 Let $\calP$  be a class of \pts, and let
 $(\str A, \str B) \in \csts(\calP)$.
 Moreover, let $\str C$ be a structure over the same vocabulary $\sigma$ as
 $\str A \cup \str B$, and let $h \colon \str A \ra \str C$ be a homomorphism. Thus,
 $(\str A, \str B), \str C, h$ is an instance of \ext{$\csts(\calP)$}.
 We will show how to check whether $h$ can be extended to a homomorphism
 $\str A \cup \str B \ra \str C$ with the help of an oracle for \eval{$\calP$}.
 Towards this goal, we first find a projection free \pt $p = (T, \lambda)$
 with a subtree $T'$ of $T$ and a child node $t \in ch(T')$ such that
 $\lambda(T'_{\prec t}) \doteq \str A$, $\lambda(t) \doteq \str B$, and such that
 $(\lambda(T'_{\prec t}), \lambda(t)) \in \csts(T')$ (recall that we assume
 the implicit translation between sets of atoms and structures, indicated by
 $\doteq$).
 By definition, such a combination exists, and because $\calP$ is computable
 there is an algorithm to construct it.

 Next, we construct a new $\sigma$-structure $\str D$ with
 $\idom(\str D) = \idom(\str C) \times (\idom(\str A) \cup \idom(\str B))$:
 for every relation symbol $R \in \sigma$, the structure $\str D$
 contains the interpretation 
 $$R^{\str D} = \{((c_1,b_1), \dots, (c_\ell,b_\ell)) \mid 
 	(c_1, \dots, c_\ell) \in R^{\str C} \text{ and }
 	(b_1, \dots, b_\ell) \in R^{\str A \cup \str B} \}.$$

 Note that two homomorphisms 
 $h_{\str C} \colon (\str A \cup \str B) \ra \str C$ and
 $h_{\str A} \colon (\str A \cup \str B) \ra (\str A \cup \str B)$
 can be combined to a homomorphism 
 $h_{\str C \times \str A} \colon (\str A \cup \str B) \ra \str D$
 and that every homomorphism 
 $h_i \colon (\str A \cup \str B) \ra \str D$
 has such a representation as a product. 
 Clearly, constructing $\str D$ is in $\FPT$.

 We claim that $h$ can be extended to a homomorphism $h' \colon (\str A \cup 
 \str B) \ra \str C$ if and only if the mapping $\mu$ defined as
 $\mu = h \times id$ (where $id$ is the identity mapping on $\idom(\str A \cup
 \str B)$) is not an answer to $p$ on $\strD \doteq \str D$, i.e.\ if $\mu \notin p(\strD)$.

 To prove the claim, first assume that $\mu \in p(\strD)$.
 Observe that $\mu = \mu_{\prec t}$.
 Then by Definition~\ref{def:semPTs}, the mapping $\mu$ cannot be
 extended to a mapping $\mu'$ such that $\mu'(\tau) \in \strD$ for
 every atom $\tau \in  \lambda(t)$. But since 
 $id \colon (\str A \cup \str B) \ra (\str A \cup \str B)$ is a
 homomorphism, the the second component of $\mu$ could be extended
 to $\lambda(t) \doteq \str B$. 
 Hence the only reason why there is no such extension of $\mu$ is because
 of the first component. It thus follows that there cannot be a homomorphism
 $h' \colon (\str A \cup \str B) \ra \str C$ that extends $h$.
 This completes this direction of the proof.

 Next assume that $\mu \notin p(\strD)$. Clearly, $\mu(\tau) \in \strD$
 holds for all atoms $\tau \in \lambda(T') \doteq \str A$. Consequently, there
 must be a node $t' \in ch(T')$ such that there exists an extension
 $\mu_{\prec t'}'$ of $\mu_{\prec t'}$ with $\mu_{\prec t'}'(\tau) \in 
 \strD$ for all atoms $\tau \in \lambda(t')$. We claim that $t'$ must in
 fact be $t$.
 To see this, towards a contradiction, assume that $t' \neq t$, and that
 there exists such an extension $\mu_{\prec t'}'$.
 Then $\mu_{\prec t'}'$ decomposes into homomorphisms 
 $\mu_{\prec t'}' = h_{\str C} \times h_{\str A}$.
 Now $h_{\str A}$ is a homomorphism
 $h_{\str A} \colon \lambda(T'_{\prec t'}) \cup \lambda(t') \ra 
 \lambda(T') \cup \lambda(t) \doteq \str A \cup \str B$ that is the
 identity on $\var(T'_{\prec t'})$. This gives the desired
 contradiction, since the existence of $h_{\str A}$ would have lead to
 the elimination of $(\lambda(T'), \lambda(t))$ from $\csts(T')$.
 Thus $t = t'$ and there exists an extension $\mu'$ of $\mu$ with
 $\mu'(\tau) \in \strD$ for each atom $\tau \in \lambda(t)$.
 Again, $\mu'$ decomposes into homomorphisms $\mu' = h_{\str C} \times
 h_{\str A}$, and we have that $h_{\str C} \colon (\str A \cup \str B)
 \ra \str C$ is a homomorphism that extends $h$.
\end{proof}

Combining the above results, we thus get the following dichotomy for projection-free
\pts.
\begin{theorem}
Let $\calP$ be a decidable class of without projections and assume $\FPT\ne \wone$. 
Then \peval{$\calP$} is in $\FPT$ if and only if 
the treewidth of $\extcore(\lambda(T'),\lambda(t))$ for
all $(\lambda(T'), \lambda(t)) \in \csts(\calP)$ is bounded by a constant.
\end{theorem}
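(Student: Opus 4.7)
The plan is to derive the theorem as an essentially immediate corollary of the two main building blocks already established in the paper: Theorem~\ref{theo:pf-evalextFPTequivalent} (the $\FPT$-Turing equivalence between \eval{$\calP$} and \ext{$\csts(\calP)$}) and Theorem~\ref{theo:extdichotomy} (the extension-core dichotomy for \extp). The task reduces to threading these two results together while checking that the hypotheses line up, in particular that the parameterizations match and that $\csts(\calP)$ is a legitimate input class for the extension dichotomy.

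For the easy direction, I would assume that the treewidth of $\extcore(\lambda(T'),\lambda(t))$ is bounded by some constant $c$ over all $(\lambda(T'),\lambda(t)) \in \csts(\calP)$. By Theorem~\ref{theo:extdichotomy}, this implies \ext{$\csts(\calP)$} $\in \ptime$, and in particular \pExt{$\csts(\calP)$} $\in \FPT$. Invoking the $\FPT$-Turing reduction from \eval{$\calP$} to \ext{$\csts(\calP)$} provided by Theorem~\ref{theo:pf-evalextFPTequivalent}, one obtains an $\FPT$ algorithm for \peval{$\calP$}: the preprocessing computes $T_\mu$ and the set $\csts(T_\mu)$ from $p$ (a computation depending only on $p$, hence on the parameter), and each oracle call is resolved in polynomial time in the database.

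For the hard direction, I would argue by contrapositive. Assume the treewidth of $\extcore(\csts(\calP))$ is unbounded. Then by Theorem~\ref{theo:extdichotomy} and the assumption $\FPT \neq \wone$, \pExt{$\csts(\calP)$} is not in $\FPT$. Applying the other direction of Theorem~\ref{theo:pf-evalextFPTequivalent}, namely the $\FPT$-Turing reduction from \ext{$\csts(\calP)$} to \eval{$\calP$}, transfers this lower bound: if \peval{$\calP$} were in $\FPT$, composing with the reduction would place \pExt{$\csts(\calP)$} in $\FPT$, a contradiction.

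The two small technical points I would want to verify carefully are, first, that the class $\csts(\calP)$ is decidable whenever $\calP$ is, which follows because the construction of critical subtrees from $p$ is effective (one enumerates subtrees $T'$ of $T$ and runs the elimination procedure on $C(T')$, testing homomorphisms between finite labelled sets of atoms), so Theorem~\ref{theo:extdichotomy} is applicable; and second, that the parameter sizes match along the reductions, which is immediate because the $(\str A,\str B) \in \csts(T_\mu)$ arising in the algorithm from Theorem~\ref{theo:pf-evalextFPTequivalent} have size bounded by $|p|$, and conversely the pattern tree produced in the reverse reduction has size polynomially bounded in $|(\str A, \str B)|$. There is no real obstacle here beyond these routine checks; the genuine work has already been done in the earlier sections, and the main role of this final theorem is to package those results into the desired dichotomy statement.
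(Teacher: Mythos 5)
Your proposal is correct and matches the paper's own (implicit) argument: the paper derives this theorem directly by combining Theorem~\ref{theo:pf-evalextFPTequivalent} with Theorem~\ref{theo:extdichotomy}, exactly as you do. The additional checks you flag (decidability of $\csts(\calP)$ and parameter compatibility along the reductions) are indeed the only routine points to verify, and they go through as you describe.
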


\section{Pattern Trees with Projection}\label{sec:withproj}

We now turn towards \emph{well-designed} pattern trees with projection.
As already mentioned in the Introduction, for technical reasons to be
discussed in Section~\ref{sec:conclusion}, in addition to the restriction
to well-designed pattern trees, we also restrict our study to \emph{simple}
pattern trees. However, most of the tractability results presented in
this section also hold for non-simple well-designed pattern trees. Even
more, all tractability results can be extended to arbitrary \pts. The
restriction to well-designed pattern trees is because we cannot characterize
the tractable classes of simple non well-designed pattern trees yet.

We start by defining \emph{simple} pattern trees, and then introduce
well-designed pattern trees.
\begin{definition}[(Simple PTs)]\label{def:simplePTs}
 A \pt $p = (T, \lambda, \calX)$ is a \emph{simple pattern tree}
 if at the one hand no atom $R(\vec v)$ occurs in 
 $\lambda(t)$ and $\lambda(t')$ for $t \neq t' \in T$, and on the
 other hand there are no two atoms $R_i(\vec v_i), R_j(\vec v_j) \in
 \lambda(T)$ with $R_i = R_j$.
\end{definition}

Well-designed pattern trees (\wdpts) restrict the distribution of variables
among the nodes of a pattern tree.
\begin{definition}[(Well-Designed Pattern Tree (\wdpt))]
 A \pt $(T, \lambda, \calX)$ is \emph{well-designed} if for every
 variable $y \in \var(T)$, the set of nodes of $T$ where $y$ appears
 is connected.
\end{definition}
An immediate consequence of this definition is that for every variable
$y \in \var(T)$, there exists a unique node $t \in T$ containing $y$
such that all nodes $t' \in T$ with $y \in \var(t')$ are descendants
of $t$. Because of this, the semantics of \wdpts can be described by
reformulating Definition~\ref{def:semPTs} in a much simpler way.
That is, given
a \wdpt $p = (T, \lambda, \calX)$ and a database $\strD$, a pp-solution
$\mu$ is in $p(\strD)$ if and only if there exists no child node $t'$
of $T_{\mu}$ and homomorphism $\mu' \colon \lambda(t') \ra \strD$
extending $\mu$.

The main result of this section will be a complete characterization of the
tractable classes of simple \wdpts. 
However, before we are able to do so, we first need to introduce yet another
property of (nodes of) \pts.
Pattern trees may contain nodes that are irrelevant for
query answering. Instead of resorting to normal forms to exclude pattern
trees containing such nodes like in \cite{LPPS2013}, here we may assume
\wdpts to contain such nodes.

In order to correctly characterize tractability, we must thus
be able to identify such nodes to exclude them from our tractability
considerations.
We formalize this concept of nodes potentially influencing the result
of a simple \wdpt after we introduced one more piece of notation necessary
for the definition. 
Let $(T,r)$ be a rooted tree and $t \in T$. Then $\branch(t)$ denotes
the set of nodes on the path from $r$ to the parent node of $t$. Moreover,
$\cbranch(t) = \branch(t) \cup \{t\}$.

\begin{definition}[(Relevant Nodes)]\label{def:relevantNode}
 Let $p = ((T,r), \lambda, \calX)$ be a simple \wdpt and $t \in T$.
 Then node $t \in T$ is \emph{relevant} if
 $\ifvar(T') \setminus \ifvar(\branch(t)) \neq \emptyset$ where
 $T'$ is the subtree of $T$ rooted in $t$.
 We use $\relvnode{T}$ to denote the set of relevant nodes in $T$.
\end{definition}

It follows immediately from \cite{LPPS2013} that this efficiently decidable
notion indeed captures the intended meaning.

\begin{proposition}
 Let $p = (T, \lambda, \calX)$ be a simple \wdpt. 
 A node $t \in T$ is relevant if and only if there exists a database $\strD$
 such that $p(\strD) \neq p'(\strD)$ where $p'$ is retrieved by removing the
 subtree of $p$ rooted in $t$.
\end{proposition}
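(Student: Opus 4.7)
The plan is to prove both directions separately, exploiting crucially the well-designedness of $p$: any free variable $x \in \ifvar(T')$ not appearing in $\ifvar(\branch(t))$ must, by the connectedness property of variable occurrences, appear exclusively inside the subtree $T'$ rooted at $t$. This observation is what makes the syntactic condition of Definition~\ref{def:relevantNode} line up with semantic relevance, and it is the only ingredient beyond the normal-form machinery of \cite{LPPS2013}.

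For the forward direction, suppose $t$ is relevant and fix $x \in \ifvar(T') \setminus \ifvar(\branch(t))$. I would construct a canonical database $\strD$ by freezing every variable of $\lambda(\cbranch(t))$ (extended along some root-to-leaf path through $T'$ to include $x$) to a pairwise distinct fresh constant; in particular $x$ becomes a constant $c_x$. Then the freezing map is a pp-solution of $p$ over $\strD$ whose associated subtree contains $\cbranch(t)$, and any maximal extension of it in $p$ yields an answer $\mu \in p(\strD)$ with $\mu(x) = c_x$. On the other hand, by well-designedness $x$ does not appear anywhere in $p'$, so $x \notin \idom(\nu)$ for every $\nu \in p'(\strD)$. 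Hence $\mu \in p(\strD) \setminus p'(\strD)$.

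For the backward direction, suppose $\ifvar(T') \subseteq \ifvar(\branch(t))$, fix an arbitrary database $\strD$, and show equality of $p(\strD)$ and $p'(\strD)$ by passing between maximal pp-solutions. Given $\mu \in p(\strD)$, witnessed by a maximal pp-solution $\mu^*$ of $p$, the restriction $\mu^*|_{\var(T \setminus T')}$ is a pp-solution of $p'$ whose maximal witness subtree is $T_{\mu^*} \cap (T \setminus T')$; any hypothetical child extension in $p'$ would also be a child extension in $p$ (the relevant child node lies in $T \setminus T' \subseteq T$), contradicting maximality in $p$. Because all free variables of $T'$ already lie in $\ifvar(\branch(t))$ and hence in the domain of the restriction, projecting to $\calX$ yields the same mapping, so $\mu \in p'(\strD)$. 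Conversely, given a maximal pp-solution of $p'$, either it is already maximal in $p$ or it extends into $T'$; in both cases the free-variable projection to $\calX$ is unchanged by the same argument, giving $p'(\strD) \subseteq p(\strD)$.

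The main obstacle I anticipate is the bookkeeping of the maximality condition when moving between $p$ and $p'$: one must rule out that removing $T'$ either activates previously blocked extensions or deactivates previously available ones, in a way that survives projection to $\calX$. Well-designedness is exactly what closes this gap, since it forces every variable shared between $T'$ and $T \setminus T'$ to occur on $\branch(t)$, so the removal of $T'$ cannot alter what is visible on $\calX$.
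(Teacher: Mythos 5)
Your argument is correct, and it is worth noting that the paper itself does not prove this proposition at all: it is stated with the remark that it ``follows immediately from \cite{LPPS2013}'', so your self-contained proof is genuinely more than what the paper offers. Your key observation is exactly the right one: since $p$ is a \wdpt, a free variable $x\in\ifvar(T')\setminus\ifvar(\branch(t))$ occurs \emph{only} inside the subtree rooted at $t$ (connectedness of its occurrence set would otherwise force it into the parent of $t$, hence into $\branch(t)$), so the frozen canonical database along a root-to-$x$ path produces an answer of $p$ mentioning $x$ that no answer of $p'$ can match. The backward direction is also sound, though the inclusion $p'(\strD)\subseteq p(\strD)$ is the one place you should be more explicit: starting from a maximal pp-solution $\nu^*$ of $p'$ you must greedily saturate it into the removed subtree, and then verify two things that both reduce to your connectedness observation — (i) the newly assigned variables occur only inside $T'$, so saturation neither enables nor blocks extensions to children outside $T'$ (preserving maximality there), and (ii) no new \emph{free} variables enter the domain, because $\ifvar(T')\subseteq\ifvar(\branch(t))\subseteq\idom(\nu^*)$ whenever the parent of $t$ lies in the witnessing subtree, so the projection to $\calX$ is unchanged. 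Similarly, in the forward inclusion one should check that the maximal witness subtree of the restricted mapping in $p'$ is exactly $T_{\mu^*}$ with $T'$ removed (it cannot be larger, by maximality of $T_{\mu^*}$ in $p$). These are routine verifications of the kind you already flag, so I would count the proof as complete in outline; note also that simplicity of the \wdpt plays no role in your argument, which is consistent with the proposition really being a fact about well-designedness alone.
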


With this notion settled, we can now aim towards our main result. The overall
idea of our algorithm for \peval{$\calP$} can be described as follows.
Given a \wdpt $p$, a database $\strD$, and a mapping $\mu$, we construct a
set of \cqs $q$ and associated databases $\strD'$ such that $\mu \in p(\strD)$
if and only if for at least one of these \cqs $q$ 
the tuple $\mu(\vec x)$ (where $\vec x \subseteq \idom(\mu)$ are the
free variables of $q$) is in $q(\strD')$. We will define three tractability
criteria that ensure this algorithm to be in $\FPT$. Intuitively, the third
tractability condition guarantees that deciding $\mu(\vec x) \in q(\strD')$
is in $\ptime$. The second condition guarantees the size of each 
relation in $\strD'$ to be polynomially bounded in the size of the input,
and the first condition guarantees that $\strD'$ can be computed efficiently.
Below, we will first introduce these three tractability criteria and show
that each of them is indeed necessary. Afterwards we show that they are also
sufficient by describing the $\FPT$ algorithm.

Note that some of the following notions and definitions are based on
ideas and similar notions in \cite{BPS2015} and \cite{KPS2016}.
However, many of them are refined carefully to provide a far more
fine-grained complexity analysis. 
Note further that the tractability results also work on \wdpts that
are not simple, and only the lower bounds hold for simple queries
only.

One important property of \pts that influences all three tractability
criteria are the (number and distribution of) variables that occur in
more than one node.
\begin{definition}[(\cite{BPS2015} Interface $\calI(t,t')$)]
 Let $(T, \lambda, \calX)$ be a \wdpt, $t, t' \in T$,
 and $T'$ a subtree of $T$.
 The \emph{interface $\calI(t,t')$} of $t$ and $t'$ is the
 set $\calI(t,t') = \var(t) \cap \var(t')$.
\end{definition}
While the size of the interface $\calI(t,t')$  (i.e., the number of
variables in each interface) for all pairs of nodes $t,t'$ can be
used for the definition of tractable classes (cf.\ \cite{BPS2015}),
this restriction is quite coarse. 
To provide a more fine grained tractability criteria, we first recall
the notion of an $S$-component from \cite{DurandM14b}:
Let $G = (V,E)$ be a graph, and $S \subseteq V$.
Then let $\calC$ be the set of connected components of $G[V \setminus S]$,
and for each $C \in \calC$, let $S_C \subseteq S$ be the set of nodes in
$S$ that have (in $G$) an edge to some node in $C$.
I.e., $S_C = \{ v \mid (v,v') \in E \text{ for some } v' \in C\}$.
The set $\calS$ of $S$-components of $G$ now is the set
$\{ G[C \cup S_C] \mid C \in \calC\}$.

For a set $S$ of variables, the notion of $S$-components extends to sets of
atoms in the obvious way via the Gaifman graph.
We will thus talk about $S$-components of sets of atoms. 

\begin{definition}[(\cite{KPS2016} Interface Components)]\label{def:iComponents}
 Let $p = ((T,r), \lambda, \calX)$ be a \wdpt, $t \in T$ a node of $T$ (but not the
 root node), and $\hat t$ the parent node of $t$.
 The set of \emph{interface components $\intcomp{t}$ of $t$} is a set of set of atoms,
 defined as the union of:
 \begin{enumerate}
 	\item The set
 	  $\{ \{\tau\} \mid \tau \in \lambda(t) \text{ and } \var(\tau) \subseteq 
 	  \calI(t,\hat t) \}$
 	  consisting of singleton sets for every atom $\tau \in \lambda(t)$ which contains
 	  only ``interface variables'', i.e.\ variables from $\calI(t,\hat t)$.

 	\item The set of all $\calI(t,\hat t)$-components of $\lambda(t)$.
 \end{enumerate}
 \end{definition}
Hence, interface components of ``type (1)'' are sets of single atoms,
while interface components of ``type (2)'' are sets of possibly several atoms.

To understand why interface components are essential for our results, recall
that solutions to \wdpts must be ``maximal'' (it must not be possible to extended
the mapping to some node).
Now a mapping cannot be extended to some node, if and only if it cannot be
extended to any one of its interface components. Thus instead of testing
the complete node at once (which might be intractable), we test the maximality
of a mapping component by component (which might be tractable). 

For each interface component, in general we are especially interested in the
existential interface variables occurring in it.
For a \wdpt $(T, \lambda, \calX)$ and a node $t \in T$ with parent node $\hat t$,
we therefore define the
\emph{inherited variables of an interface component $\calS \in \intcomp{t}$}
as the set $\calV_t(\calS) = (\calI(t,\hat t) \cap \var(\calS)) \setminus \calX$.

However, just for the first tractability condition the free variables are
actually of interest. Thus, for $((T,r), \lambda, \calX)$, $t$, $\hat t$,
and $\calS$ as before, let $\calV_t^+(\calS) = \calV_t(\calS) \cup (\ifvar(\hat t)
\cap \var(\calS))$.
Also, for a set $\calV$ of variables, recall the definition of the structure
$\str S_{\calV}$ from Section~\ref{sec:extproblem}.
\tractbox{a}{
	There is a constant $c$, such that for each $p = (T, \lambda, \calX) \in \calP$,
	the treewidth of $\extcore(\str S_{\calV_t^+(\calS)}, \calS)$ is bounded by $c$ 
	for all relevant nodes $t \in T$ (with $t \neq r$)
	and all $\calS \in \intcomp{t}$.
}

Intuitively, condition (a) guarantees that for each such interface component,
given some mapping $\mu$ on (a subset of) the free variables plus a mapping on
the inherited variables of this interface component, deciding
whether the mapping can be extended in such a way that all atoms of the interface
component are mapped into some database is in polynomial time.

Next, we show that tractability condition (a) is indeed necessary.
\begin{lemma}\label{lem:condAnec}
 Let $\calP$ be a decidable class of simple \wdpts such that 
 tractability condition (a) is not satisfied.
 Then \peval{$\calP$} is $\cowone$-hard.
\end{lemma}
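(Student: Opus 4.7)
The plan is to reduce from the extension problem on the class of bad pairs identified by the failure of condition (a). Define $\calC^* = \{(\str S_{\calV_t^+(\calS)}, \calS) : p = (T,\lambda,\calX) \in \calP,\ t \in \relvnode{T}\setminus\{r\},\ \calS \in \intcomp{t}\}$. Failure of (a) means that the extension cores over $\calC^*$ have unbounded treewidth, so by Theorem~\ref{theo:extdichotomy} the problem $\pExt{\calC^*}$ is $\wone$-hard. I will give an $\FPT$ many-one reduction from $\pExt{\calC^*}$ to the complement of $\peval{\calP}$, which yields $\cowone$-hardness of $\peval{\calP}$.

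Given an input instance $((\str A, \str B), \strD, h)$ of $\pExt{\calC^*}$, I first enumerate $\calP$ to locate a pattern tree $p=(T,\lambda,\calX) \in \calP$, a relevant node $t \neq r$, and an interface component $\calS \in \intcomp{t}$ such that $(\str A, \str B) = (\str S_{\calV_t^+(\calS)}, \calS)$; decidability of $\calP$ together with the finiteness of candidates of bounded size bounds $|p|$ by a computable function of $|(\str A,\str B)|$, so this step is $\FPT$. I then construct the target instance $(p, \strD', \mu^*)$ as follows: define a mapping $\mu_0$ on $\var(\cbranch(\hat t))$ that sends each interface variable $v \in \calV_t^+(\calS) \subseteq \var(\hat t)$ to $h(v)$ and every other branch variable to a fresh constant $c_v$; put into $\strD'$ the tuples $\mu_0(\lambda(\cbranch(\hat t)))$, for every other component $\calS' \in \intcomp{t}\setminus\{\calS\}$ the tuples $\nu'(\lambda(\calS'))$ for a fresh-constant extension $\nu'$ of $\mu_0$, and for every relation symbol $R$ occurring in $\calS$ all tuples of $R^{\strD}$; finally set $\mu^* = \mu_0|_{\calX}$.

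The correctness claim is that $\mu^* \in p(\strD')$ if and only if $h$ does \emph{not} extend to a homomorphism $\str A \cup \str B \to \strD$. The simple pattern tree property — every atom uses a unique relation symbol — keeps tuples for distinct nodes and components separate in $\strD'$ and forces any pp-solution $\mu'$ with $\mu'|_{\calX}=\mu^*$ to coincide with $\mu_0$ on $\var(\cbranch(\hat t))$. If $h$ does not extend, then $\mu_0$ itself has $T_{\mu_0} = \cbranch(\hat t)$, no sibling branch of $T_{\mu_0}$ is extendable (their atoms use relations absent from $\strD'$), and extension to the child $t$ decomposes into extensions of each interface component, of which $\calS$'s fails precisely when $h$ fails; hence $\mu_0 \in p'(\strD')$ and $\mu^* \in p(\strD')$. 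Conversely, if $h$ extends, then relevance of $t$ supplies a free variable $y$ appearing in the subtree rooted at $t$ but not in $\var(\cbranch(\hat t))$; the construction is augmented with tuples along the path from $t$ to the topmost node $t^y$ introducing $y$, compatible with the extensions derived from $h$, so that any pp-solution $\mu'$ whose $T_{\mu'}$ includes $t$ is forced to include $t^y$ as well — which puts $y$ into $\mu'|_{\calX}$ and contradicts $\mu'|_{\calX} = \mu^*$. Hence no witness $\mu'$ exists and $\mu^* \notin p(\strD')$.

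The main obstacle will be this last step: arranging $\strD'$ so that, under projection, any extension past $t$ is guaranteed to propagate to a node introducing a new free variable. This is delicate when $t$ itself does not introduce a new free variable and relevance is witnessed only by a descendant $t^y$: one must populate $\strD'$ with tuples for all intermediate nodes on the $t$-to-$t^y$ path in a way that (i) is compatible with the existential interface variables of $t$ whose values depend on $h$'s extension, and (ii) does not accidentally create new pp-solutions on other parts of $T$. The well-designed simple pattern tree structure — uniqueness of relation symbols and connectedness of each variable's occurrences — is what makes this controlled augmentation possible, and is the key technical point of the proof.
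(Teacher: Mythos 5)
Your proposal is correct and follows essentially the same route as the paper: reduce a hard homomorphism/extension problem on the offending pairs $(\str S_{\calV_t^+(\calS)},\calS)$ to the complement of \peval{$\calP$} by freezing the branch down to $\hat t$ (the paper uses a single constant $d$ where you use per-variable fresh constants), killing all sibling subtrees via empty relations, making every interface component other than $\calS$ trivially extendable, embedding the target structure into the relations of $\calS$, and using relevance of $t$ to force any extension past $t$ to reach a node introducing a new free variable. The one step you flag as delicate — populating the path from $t$ to $t^y$ — is resolved in the paper simply by filling every relation mentioned on that path (and the non-$\calS$ atoms of $t$) with \emph{all} tuples over the domain, which satisfies both of your requirements (i) and (ii) immediately because simplicity guarantees those relation symbols occur nowhere else.
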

\begin{proof}
 For a well-designed pattern tree $p$, let the \emph{relevant component set} 
 $\rcs(p)$ contain all the pairs $(\str S_{\calV_t^+(S)}, \calS)$
 as defined in tractability condition (a). Moreover, for a class
 $\calC$ of pattern trees, let $\rcs(\calC) = \bigcup_{p \in \calC}\rcs(p)$.
 We will --- by an $\FPT$-reduction --- reduce $\Hom(\extcore(\rcs(\calP)))$
 to the co-problem of \eval{$\calP$}. The result thus follows from
 \cite{Grohe2007}, since $\extcore(\rcs(\calP)) = \icore(\extcore(\rcs(\calP)))$
 does not have bounded treewidth by assumption.

 Thus, assume an instance $(\str E, \str F)$ of $\Hom(\extcore(\rcs(\calP)))$.
 First of all, find a \wdpt $p = ((T,r), \lambda, \calX) \in \calP$ with a relevant
 node $t \in T$ with $t \neq r$ and an interface component
 $\calS \in \intcomp{t}$ such that 
 $\str E = \extcore(\str S_{\calV_t^+(S)}, \calS)$.
 They exist by definition and since $\calP$  is decidable, finding
 them is in $\FPT$.

 Since $t$ is a relevant node, there exists at least one node $t' \in T$ with
 $\ifvar(t') \setminus \ifvar(\branch(t')) \neq \emptyset$ such that either
 $t = t'$ or $t'$ is a descendant of $t$. Among all nodes satisfying this
 property, pick $t'$ to be the node with the minimal distance to $t$.

 We define the set $N$ of nodes as follows. If $t = t'$, then $N = \emptyset$,
 otherwise set $N = \cbranch(t') \setminus \cbranch(t)$.

 We define a database $\str D$ over the set of relational symbols in $p$
 as follows. First, $\idom(\str D) = \idom(\str F) \cup \{d\}$,
 where $d$ is a fresh value not occurring in $\idom(\str F)$.
 The relations in $\str D$ contain the following tuples:
 \begin{itemize}
 	\item For each relation symbol $R$ mentioned outside of 
 		$\lambda(\cbranch(t'))$ set $R^{\str D} = \emptyset$.

 	\item For each relation symbol $R$ mentioned in $\lambda(\branch(t))$,
 		let $R^{\str D}$ contain the single tuple $(d, \dots, d)$.

 	\item For each relation symbol $R$ mentioned in
 		$\lambda(\cbranch(t') \setminus \branch(t)) \setminus \calS$, let
 		$R^{\str D}$ contain all possible tuples
 		$(a_1, \dots, a_k) \in \idom(\str F) \cup \{d\}$.
 		
 	\item For each relation symbol $R$ mentioned in $\calS$, observe
 		that there exists a relation symbol $R'$ in the vocabulary of
 		$\str E$ that was derived from $R$ during the computation of
 		the extension core. That is, the arity of $R'$ is less or equal
 		than the arity of $R$. Let $k$ be the arity of $R$, 
 		let $\{i_1, \dots, i_\ell\} \subseteq \{1, \dots, k\}$ be those
 		positions of $R$ containing values from $\calV_t^+(S)$,
 		and $\{o_1, \dots, o_m\} = \{1, \dots, k\} \setminus \{i_1, \dots, i_\ell\}$
 		those positions of $R$ that contain values from $\var(\calS)
 		\setminus \calV_t^+(S)$. 
 		Then, for every $(a_{i_1}, \dots, a_{i_{\ell}}) \in (R')^{\str F}$,
 		let $R^{\str D}$ contain all tuples 
 		$(a_1, \dots, a_k)$ with $(a_{o_1}, \dots, a_{o_m}) = (d, \dots, d)$,
 		i.e.\ we extend $(a_{i_1}, \dots, a_{i_{\ell}}) \in (R')^{\str T}$ by
 		assigning $d$ to the missing positions.

 \end{itemize}
 Finally, we define the last part of the instance of \eval{$\calP$},
 the mapping $\mu$ as $\mu(x) = d$ for all $x \in \ifvar(\branch(t))$.
 
 It is now the case that $\mu \in p(\str D)$ if and only if
 there is no homomorphism from $\str E$ to $\str F$. We prove
 this property in two steps. First, we show that $\mu \in p(\str D)$
 only depends on whether $\mu$ can be extended to $t$ or not.
 After this we show that such an extension of $\mu$ exists
 if and only if there is a homomorphism $h \colon \str E \ra \str F$.

 First, observe that the only possible extension $\mu'$ of $\mu$ such that
 $\mu'(\tau) \in \str D$ for every $\tau \in \lambda(\branch(t))$ is obviously
 $\mu'$ mapping every variable in $\var(\branch(t))$ to $d$. It follows 
 immediately, that for all nodes $t'' \neq t$ with their parent node in
 $\branch(t)$ the mapping $\mu'$ cannot be extended to $\lambda(t'')$, since
 for all relation symbols $R$ mentioned in $\lambda(t'')$ the relation
 $R^{\str D}$ is empty. 

 Next -- in order to conclude $\mu \in p(\str D)$ if and only if there exists
 no extension $\mu''$ of $\mu'$ with $\mu''(\tau) \in \str D$ for all $\tau
 \in \lambda(t)$ -- assume $\mu''$ exists. Then $\mu''$ can be obviously extended
 to $\mu'''$ with $\mu'''(\tau) \in \str D$ for all $\tau \in \cbranch(t')$ since
 for all atoms on $N \cup \{t'\}$, every possible atom over $\idom(\str D)$ is
 contained in $\str D$. Since the other direction -- $\mu'''$ exists, therefore
 there is an extension of $\mu'$ to $t$ -- is trivial, it remains to show that
 the extension $\mu''$ exists if and only if there is a homomorphism 
 $h \colon \str E \ra \str F$.

 To see that this is the case, observe that every such homomorphism $h$ in
 combination with $\mu'$ gives a mapping from $\calS$ into $\str D$, and vice
 versa, every mapping $\mu \colon \calS \ra \str D$ restricted to $\idom(\str F)$
 gives the desired homomorphism. For the remaining atoms in $\lambda(t) \setminus
 \calS$, observe that every possible mapping sends them into $\str D$, since
 $\str D$ again contains every possible atom for these relations.
\end{proof}

The second source of hardness are the existential variables shared between
a component and its predecessors, and the second condition restricts the
number of such shared variables.
\begin{definition}[(Interface Component Width)]
 Let $p = (T, \lambda, \calX)$ be a \wdpt, $t \in T$, and 
 $\calS \in \intcomp{t}$. 
 The \emph{width} of the interface component $\calS$ is $|\calV_t(\calS)|$. 
 For a node $t \in T$, the \emph{interface component width} of $t$ is
 the maximum width of any interface component $\calS$ of $t$.
 The  interface component width of $p$ is the maximum interface component
 width of all $t \in \relvnode{T}$.
\end{definition}

\tractbox{b}{ There is a constant $c$ such that for all $p \in \calP$
the interface component width of $p$ is at most $c$.}   

\begin{lemma}\label{lem:condBnec}
	Let $\calP$ be a decidable class of simple \wdpts such
	that tractability condition (b) is not satisfied.
	Then \peval{$\calP$} is $\cowone$-hard.
\end{lemma}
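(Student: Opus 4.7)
The plan is to mirror the strategy of Lemma~\ref{lem:condAnec} by reducing a $\wone$-hard parameterized problem to the complement of \peval{$\calP$} via an $\FPT$ many-one reduction. Since condition (b) fails and $\calP$ is decidable, for every constant $c$ we can computably identify a \wdpt $p = ((T,r), \lambda, \calX) \in \calP$, a relevant non-root node $t$, and an interface component $\calS \in \intcomp{t}$ with $|\calV_t(\calS)| > c$. If condition (a) also fails, Lemma~\ref{lem:condAnec} already yields the result, so we may assume (a) holds throughout.

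The scaffold of the construction essentially duplicates that of Lemma~\ref{lem:condAnec}. Let $t'$ be the nearest relevant descendant of $t$ (possibly $t$ itself); introduce a fresh dummy value $d$; set $\mu(x) = d$ for all $x \in \ifvar(\branch(t))$, together with values dictated by the source instance on any free interface variables of $\calS$ in $\calV_t^+(\calS) \cap \calX$; and populate $\str{D}'$ so that relations outside $\lambda(\cbranch(t'))$ are empty, relations in $\lambda(\branch(t))$ contain only the all-$d$ tuple, relations in $\lambda(\cbranch(t') \setminus \branch(t)) \setminus \calS$ contain every possible tuple over the extended domain, and relations in $\calS$ are obtained by padding the source-instance tuples with $d$ on positions not indexed by $\calV_t^+(\calS)$. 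A case analysis parallel to that in Lemma~\ref{lem:condAnec} then shows $\mu \in p(\str{D}')$ if and only if the source instance is a ``no'' instance.

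The main obstacle — and the key difference from Lemma~\ref{lem:condAnec} — is identifying a suitable $\wone$-hard source problem. The natural candidate \pcqeval{$\calQ$} for the class $\calQ = \{\anspred(\calV_t^+(\calS)) \leftarrow \calS\}$ ranging over the witnessing triples above does not immediately do the job, because condition (a) may well keep the extension cores of $\extcq(\calQ)$ of bounded treewidth even as $|\calV_t(\calS)|$ grows. The plan is instead to exploit that $t$ is \emph{relevant}: its subtree contains a free variable not already appearing in $\branch(t)$, forcing at least one descendant node to carry fresh free variables interacting with the existential interface of $\calS$. Combining the extension problem on $\calS$ with the Boolean satisfiability contributed by this descendant under $\mu$ yields a Boolean \cq whose core, by Theorem~\ref{theo:grohe}, must have treewidth growing with $|\calV_t(\calS)|$, and it is this combined problem that plays the role of $\pHom(\extcore(\rcs(\calP)))$ in Lemma~\ref{lem:condAnec}. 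The rigorous version requires a delicate analysis leveraging the simplicity hypothesis (distinct relation symbols across atoms of the whole \wdpt) to rule out the various ways in which the apparent growth of the interface could be absorbed by homomorphic folding once the auxiliary downstream structure is attached.
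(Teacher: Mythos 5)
There is a genuine gap at the heart of your argument: the hardness behind condition~(b) is \emph{not} a treewidth phenomenon, and your key step---``combining the extension problem on $\calS$ with the Boolean satisfiability contributed by this descendant yields a Boolean \cq whose core, by Theorem~\ref{theo:grohe}, must have treewidth growing with $|\calV_t(\calS)|$''---is false. Consider the star component $\calS = \{R_1(v_1,z),\dots,R_k(v_k,z)\}$ with interface variables $v_1,\dots,v_k$ and a single local variable $z$: its interface component width is $k$, yet every Boolean \cq or extension core you can build from it (with or without an attached descendant carrying a fresh free variable) has treewidth $1$ and is already a core, since all relation symbols are distinct in a simple \wdpt. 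So no reduction whose hardness certificate is the treewidth of a core can witness the failure of condition~(b); Theorem~\ref{theo:grohe} simply gives tractability here. The actual source of hardness is quantifier alternation: the semantics demands an extension $\mu'$ of $\mu$ to the existential variables of the branch such that \emph{no} further extension to $\calS$ exists, i.e.\ a $\forall\exists$ condition over the $k$ existential interface variables. The paper accordingly reduces from model checking sentences $\forall x_1\dots\forall x_k\,\exists y\,\bigwedge_i E_i(x_i,y)$ ($\wone$-hard by Chen--Dalmau), letting the $k$ interface variables play the universally quantified $x_i$ (the branch relations are made \emph{full} so that every assignment to them is a pp-solution) and forcing all local variables of $\calS$ to a common value $y$ via diagonal relations.

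This also breaks your scaffold: copying Lemma~\ref{lem:condAnec}'s construction, where $\lambda(\branch(t))$ contains only the all-$d$ tuple and $\mu$ pins the branch, fixes the interface variables to a single value and collapses the $\forall$ to a trivial instance---exactly the wrong regime. Your instinct that the ``natural candidate'' extension problem on $\calS$ does not work because condition~(a) keeps its cores bounded is correct, but the resolution is to change the \emph{type} of source problem (from homomorphism/treewidth to quantified conjunctive queries), not to manufacture a high-treewidth Boolean \cq; as the paper notes in Section~\ref{sec:conclusion}, this reliance on quantified \cqs is precisely why the result is restricted to simple \wdpts.
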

\begin{proof}
 We show the result by an $\FPT$-reduction of the problem of model checking
 FO sentences $\phi_k$ of the following form:
 \[ \phi_k = \forall x_1 \dots \forall x_k \exists y \bigwedge_{i \in k} E_i(x_i,y) \]
 By \cite{CD2012}, model checking for this class of sentences is
 $\wone$-hard.
 Therefore, let $\phi_k$ and a database $\str E$ with relations $E_i^{\str E}$
 over $\idom(\str E)$ be given.

 First, compute a \wdpt $p = (T, \lambda, \calX) \in \calP$ with an
 interface component width of at least $k$. W.l.o.g.\ we assume $p$
 to contain only binary atoms: Since we assume a bounded arity,
 binary atoms can be easily simulated with atoms of higher arity.
 Consider the relevant node $t \in T$ and an interface component $\calS \in \intcomp{t}$
 such that the interface component width of $\calS$ is at least $k$. 
 Since we assume relations to be of some bounded arity, $\calS$
 cannot be of type (1) (Definition~\ref{def:iComponents}).
 W.l.o.g., we thus assume that $\calS$ is of type (2).

 Since $t$ is relevant, there exists some $t'$ which is either $t$ itself or
 some descendant of $t$ such that $\ifvar(t') \setminus \ifvar(\branch(t')) \neq
 \emptyset$. Among all possible candidates, choose some $t'$ at a minimal
 distance to $t$.

 For the definition of the database $\strD$, recall that we assume each
 relation symbol to occur at most once in $p$. We define $\str D$ as
 follows.
 First, $\idom(\strD) = \idom(\str E)$. 
 Based on this, the database contains the following relations:
 \begin{itemize}
 	\item For each relation symbol $R$ mentioned outside of
 		$\lambda(\cbranch(t'))$ set $R^{\str D} = \emptyset$.

 	\item For each relation symbol $R$ mentioned in 
 		$\lambda(\cbranch(t')) \setminus \calS$,
 		let $R^{\str D}$ contain all possible tuples
 		$(a_1, \dots, a_k) \in \idom(\str E)$

 	\item For the relation symbols $R$ mentioned in $\calS$,
 		proceed as follows. Choose $k$ interface variables
 		$v_1, \dots, v_k \in \calI_t(\calS)$. Let $L = \var(\calS)
 		\setminus \var(\branch(t))$ be the ``local variables''  of $\calS$.
 		Observe that $L \neq \emptyset$, since otherwise $\calS$
 		could not be an interface component (it requires at least
 		one variable to connect the variables from $\calI_{t}(\calS)$).
 		By the same reasoning, for each of the variables $v_i$, 
 		there must exist at least one atom $R_i(v_i, z_i)$ or $R_i(z_i, v_i)$
 		for some $z_i \in L$. We will assume $R_i(v_i, z_i)$ in the following,
 		the other case is analogous). Now for each $v_i$, fix one such 
 		atom. Based on this, we define the following relations to be 
 		contained in $\str D$:

 		\begin{itemize}
 			\item For each of the selected atoms $R_i(v_i, z_i)$, let
 				$R_i^{\str D} = E_i^{\str E}$. I.e., we assume $R_i$
 				to take the place of $E_i$.

 			\item For every atom $R_i(z_1, z_2) \in \calS$ such that $z_1, z_2 \in L$,
 				define $R_i^{\str D}$ to contain all tuples
 				$\{(d,d) \mid d \in \idom(\str E)\}$.

 			\item For the remaining atoms $R_i(z_1, z_2) \in \calS$, define
 				$R_i^{\str D} = (\idom(\str E))^2$.

 		\end{itemize}
 \end{itemize}
 Then $\mu$ is an arbitrary mapping $\ifvar(\branch(t)) \ra \idom(\str E)$.

 It now follows by the same arguments as in the proof of Lemma~\ref{lem:condBnec}
 that we have $\mu \notin p(\str D)$ if and only if for every extension $\mu'$
 of $\mu$ to $\var(\branch(t))$, there exists an extension $\nu$ of $\mu'$
 such that $\nu(\tau) \in \str D$ for all $\tau \in \calS$.

 To close this proof, we thus only need to show that such an extension exists
 if and only if $\phi_k$ is satisfied. 

 First, assume that $\phi_k$ is satisfied. Then, for all $z \in L$, define
 $\nu(z)$ to be the value of $y$ in $\phi_k$. This clearly maps $\calS$
 into $\str D$. 

 Next, assume that $\phi_k$ is not satisfied. Thus there exists some assignment
 to $x_1, \dots, x_k$ such that no suitable $y$ value exists. Then for the mapping
 $\mu'$ assigning exactly those values to the selected interface variables 
 $v_1, \dots, v_k$, there exists no extension of $\mu'$ to $\calS$.
 This concludes the proof. 	
\end{proof}

It remains to define the tractability condition ensuring that the evaluation
problem for the resulting \cqs will be tractable. 
For this, we first need to introduce the notion of an \emph{component interface
atom}. For a well-designed pattern tree $(T, \lambda, \calX)$, a subtree $T'$ of $T$,
a node $t \in ch(T')$, and an interface component $\calS \in \intcomp{t}$, let the 
component interface atom $\cia{\calS}$ be the atom $R(\vec v)$ where $\vec v$
contains the variables in $\calV_t(\calS)$ and $R$ is a fresh relation symbol.
Observe that this implies $\cia{\calS} = R()$ in case $\calV_t(\calS) = \emptyset$.

The intuition for $R$ is that for each interface component, we want an atom that
covers exactly the variables in $\calV_t(\calS)$.

\tractbox{c}{ 
	There is a constant $c$ such that for every well-designed pattern tree
	$p = ((T,r), \lambda, \calX) \in \calP$ and every subtree $T'$ of $T$ containing
	$r$, the treewidth of 
	$\extcore(\str S_{\ifvar(T')}, \lambda(T') \cup 
		\bigcup_{i=1}^{n} \{\cia{\calS_i}\})$
	is bounded by $c$ for every combination
	$(\calS_1, \dots, \calS_n) \in \intcomp{t_1} \times \dots \times \intcomp{t_n}$
	where $\{t_1, \dots, t_n\} = ch(T') \cap \relvnode{T}$.	
}

\begin{lemma}\label{lem:condCnec}
	Let $\calP$ be a decidable class of simple \wdpts such
	that tractability condition (c) is not satisfied.
	Then \peval{$\calP$} is $\wone$-hard.
\end{lemma}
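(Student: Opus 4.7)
The idea is to reduce a parameterized \emph{positive} homomorphism problem to \peval{$\calP$}, mirroring Lemmas~\ref{lem:condAnec} and~\ref{lem:condBnec} but with the direction of hardness flipped. Let $\calC$ denote the class of all extension cores $\str E = \extcore(\str S_{\ifvar(T')},\lambda(T')\cup\bigcup_i\{\cia{\calS_i}\})$ arising in tractability condition~(c): here $p=((T,r),\lambda,\calX)$ ranges over $\calP$, $T'$ over subtrees of $T$ containing $r$, $\{t_1,\dots,t_n\}=ch(T')\cap\relvnode{T}$, and $\calS_i\in\intcomp{t_i}$. By hypothesis $\calC$ has unbounded treewidth; since every extension core coincides with its own core, Theorem~\ref{theo:grohe} (read through the homomorphism/CQ-evaluation equivalence) shows that $\pHom(\calC)$ is $\wone$-hard. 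Unlike in the previous two lemmas, the hardness is $\wone$ rather than $\cowone$: condition~(c) governs a genuinely positive existential test---``is there an extension of $\mu$ witnessing both a pp-solution on $T'$ and the chosen obstructing components $\calS_i$?''---and so reduces naturally from $\pHom$ rather than from its complement.

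Given an instance $(\str E,\str F)$ of $\pHom(\calC)$, decidability of $\calP$ lets me recover $p\in\calP$ together with $T'$ and the $\calS_i$'s producing $\str E$. I pick a fresh value $d$, set $\mu(x)=d$ for every $x\in\calX$, and build $\strD$ over $\idom(\str F)\cup\{d\}$ via three kinds of relation assignments. First, every relation symbol that does not appear in $\lambda(T')$ or in any $\lambda(t_i)$ is interpreted as empty; this blocks $\mu$ from extending past $T'\cup\{t_1,\dots,t_n\}$ and makes every non-relevant child of $T'$ trivially non-extendable. Second, relations used in $\lambda(T')$ are populated by inverting the ``remove-$\ifvar(T')$'' step of the extension-core construction: for each modified symbol $R^{i_1=x_{i_1},\dots}$ in $\str F$ and each tuple it contains, I reinsert $d$ at the factored positions and add the resulting tuple to $R^{\strD}$, so that extensions of $\mu$ satisfying $\lambda(T')$ correspond bijectively to mappings of the $\lambda(T')$-portion of $\str E$ into $\str F$.

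Third, for each $\calS_i$ I install gadgets indexed by tuples $\vec v$ of arity $|\calV_{t_i}(\calS_i)|$ over $\idom(\str F)$ satisfying $\vec v\notin R_i^{\str F}$, where $R_i$ is the fresh symbol of $\cia{\calS_i}$. Each gadget uses fresh local values $z_j^{\vec v}$---one per internal variable of $\calS_i$---and, for every atom $\tau\in\calS_i$, adds to $\strD$ the tuple obtained from $\tau$ by substituting $d$ at free-variable positions, the entries of $\vec v$ at inherited positions, and the $z_j^{\vec v}$ at local positions. Simplicity of $p$ guarantees that these three families of assignments never overlap, since each relation symbol appears in at most one atom of $p$. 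The resulting correspondence is then straightforward: $\mu\in p(\strD)$ holds iff some extension of $\mu$ maps $\lambda(T')$ into $\strD$ and, for each relevant child $t_i$, the chosen $\calS_i$ is non-extendable from that extension. By construction this is exactly the statement that the $\lambda(T')$-atoms of $\str E$ map into $\str F$ and that each $\cia{\calS_i}$-tuple lands in $R_i^{\str F}$, i.e.\ that a homomorphism $h\colon \str E\to\str F$ exists.

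The main obstacle is the correctness of the gadgets for type-(2) interface components: I must rule out, for $\vec v\in R_i^{\str F}$, any ``cross-gadget'' assignment to the internal variables of $\calS_i$ that mixes local values taken from different gadgets. The argument is that freshness of the $z_j^{\vec v}$, combined with simplicity of $p$, forces each used tuple to come from a single gadget---the shared local values between atoms of $\calS_i$ propagate this choice across the connected component---so the inherited slot is forced to be some good $\vec v'\notin R_i^{\str F}$, contradicting the requirement that this slot be $\vec v$. A secondary bookkeeping point is to perform the substitution $x\mapsto d$ uniformly in item~(ii) and to consistently invert the modified relation symbols produced by the $\setminus\ifvar(T')$ step of the extension-core construction; both are straightforward once $\mu$ is chosen as the constant map to $d$.
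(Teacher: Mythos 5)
Your overall strategy matches the paper's: exploit the unbounded treewidth of the extension cores from condition (c) to obtain a \wone-hard homomorphism problem, and embed it into \peval{$\calP$} by fixing the free variables of $T'$, reproducing the target structure inside the relations of $\lambda(T')$, and installing complement gadgets for the chosen interface components $\calS_i$ so that satisfying $\cia{\calS_i}$ coincides with $\calS_i$ being non-extendable. (The paper routes this through the extension problem and Theorem~\ref{theo:extdichotomy} rather than through $\pHom$ of the extension cores directly, but that difference is cosmetic.)

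There is, however, a genuine gap in how you treat the children $t_1,\dots,t_n$. You set every relation symbol outside $\lambda(T')\cup\bigcup_i\lambda(t_i)$ to empty. Now take a relevant child $t_i$ whose relevance comes only from a proper descendant, i.e.\ $\ifvar(t_i)\setminus\ifvar(\branch(t_i))=\emptyset$ but some descendant $s_i$ introduces a new free variable --- Definition~\ref{def:relevantNode} only requires a new free variable somewhere in the subtree rooted at $t_i$, not at $t_i$ itself. If the chosen $\calS_i$ \emph{is} extendable from $\mu'$, your construction lets the maximal pp-solution grow to $T'\cup\{t_i\}$ but no further; since $t_i$ contributes no new free variable, the restriction to $\calX$ is still $\mu$, so $\mu\in p(\strD)$ even though the corresponding $\cia{\calS_i}$-constraint is violated. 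The backward direction of your correctness claim therefore fails. The paper repairs exactly this by filling the relations along the path $N_i=\cbranch(s_i)\setminus\cbranch(t_i)$ with all possible tuples, so that any successful extension to $t_i$ is forced onward to $s_i$, binds a new free variable, and thereby destroys membership of $\mu$. Three smaller points: $\mu$ must be defined on $\ifvar(T')$ only, not on all of $\calX$ (otherwise no solution can ever restrict to $\mu$ when free variables live below the $t_i$); you should first dispose of the case where condition (b) fails via Lemma~\ref{lem:condBnec}, since otherwise the $\cia{\calS_i}$ atoms have unbounded arity and the appeal to Grohe's theorem for the $\pHom$ lower bound is unavailable; and the relations for $\lambda(t_i)\setminus\calS_i$ are left unspecified, whereas they must be full for the forward direction to go through.
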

\begin{proof}
 First, assume that the interface component width of the instances is bounded.
 Otherwise, the result follows from Lemma~\ref{lem:condBnec}.
 In particular, we may thus assume that all instances of 
 $\extcore(\str S_{\ifvar(T')}, \lambda(T') \cup \bigcup_{i=1}^{n} \{\cia{\calS_i}\})$
 for all $p \in \calP$ are of bounded arity.

 Let $\textit{solcheck}(\calP)$ be the class of all structures
 $\extcore(\str S_{\ifvar(T')}, \lambda(T') \cup \bigcup_{i=1}^{n} \{\cia{\calS_i}\})$
 for $\calP$ as defined in tractability condition (c).
 We reduce \ext{$\mathit{solcheck}(\calP)$} to \peval{$\calP$} via an $\FPT$ reduction.
 The result then
 follows directly from Theorem~\ref{theo:extdichotomy}, since by assumption
 there does not exist a
 constant $c$ such that the treewidth of the extension core of each pair
 $(\str A, \str B) \in \mathit{solcheck}(\calP)$ is less or equal than $c$.

 Thus, let $(\str A, \str B)$, $\str D$, and $h$ be an instance of 
 \ext{$\textit{solcheck}(\calP)$}. We show how to construct a database $\strD'$
 and a mapping $\mu$ such that $\mu \in p(\strD')$ if and only if there exists
 an extension $h'$ of $h$ that is a homomorphism from $\str B$ to $\str D$.

 First of all, find a $p = (T, \lambda, \vec x) \in \calP$ and a subtree
 $T'$ of $T$ containing the root of $T$ such that $(\str A, \str B) =
 (\str S_{\ifvar(T')}, \lambda(T') \cup \bigcup_{i=1}^{n} \{\cia{\calS_i}\})$
 for some combination 
 $(\calS_1, \dots, \calS_n) \in \intcomp{t_1} \times \dots \times \intcomp{t_n}$
 where $\{t_1, \dots, t_n\} = ch(T') \cap \relvnode{T}$.
 For the definition of a database
 $\strD'$ and a mapping $\mu$ such that $\mu \in p(\strD')$ if and only if
 $h'$ exists, we need to define the following sets of nodes first.
 Let $ch(T') \cap \relvnode{T} = \{t_1, \dots, t_n\}$.
 For every $t_i \in ch(T')$, we define the set $N_i$ of nodes as follows:
 \begin{itemize}
 	\item If $\ifvar(t_i) \setminus \ifvar(\branch(t_i)) \neq \emptyset$
 		(i.e., $t_i$ contains some ``new'' free variable):
 		Then $N_i = \emptyset$.

 	\item Otherwise, let $s_i \in T$ be a descendant of $t_i$ such that
 		$\ifvar(s_i) \setminus \ifvar(\branch(t_i)) \neq \emptyset$ and
 		such that this property holds for no other node $s_i' \in T$
 		on the path from $t_i$ to $s_i$.
 		Then $N_i = \cbranch(s_i) \setminus \cbranch(t_i)$.
 \end{itemize}
 Finally, let $N = \bigcup_{i=1}^n N_i$.
 Now all notions are in place to describe the database $\strD'$. 

 For all atoms
 $R(\vec y) \in \lambda(T) \setminus (\lambda(T') \cup \lambda(ch(T')) \cup 
 \lambda(N))$,
 we set $R^{\strD'} = \emptyset$. I.e., for all atoms neither in $T'$ nor in any of
 the child nodes of $T'$ (or their extensions to some node with a ``new'' free
 variable), no matching values exist in the database.

 For all atoms $R(\vec y) \in \lambda(T')$, we do the following. 
 The relations for the atoms in $\lambda(T')$ are designed in such a way that all
 free variables $x \in \ifvar(T')$ in these atoms can only be mapped to $h(x)$.
 Since this way the values for all free variables are fixed,
 in the remainder, for atoms in $\lambda(T')$ we will only describe the values for
 those positions containing existentially quantified variables (recall
 that we only consider simple queries, thus these positions are uniquely defined).
 By slight abuse of notation (i.e., just ignoring the free variables), for every
 atom $R(\vec y) \in \lambda(T')$, we define $R^{\strD'} = R^{\strD}$.

 For all atoms $R(\vec y) \in \lambda(N)$, set
 $R^{\strD'} = \idom(\strD')^k$, where $k$ is the arity of $R$ (where $\idom(\str D')$
 is implicitly defined as the union over all elements mentioned in the definition of
 $\str D'$). 

 Finally, we define the relations for the atoms in $ch(T')$. Thus consider $t_i \in ch(T')$.
 Let $\vec v$ be the set of the inherited variables of the interface component
 $\calS_i \in \intcomp{t_i}$ selected for the construction of $\str B$. We use
 $R(\vec v)$ to denote the atom $\cia{\calS_i}$.

 For all atoms $R(\vec y) \in \lambda(t_i) \setminus \calS_i$,
 set $R^{\strD'} = \idom(\strD)^k$ where $k$ is the arity of $R$.
 For the atoms in $\calS_i$, we distinguish between
 $\calS_i$ being of type (1), or of type (2).

 If $\calS_i$ consists of a single atom of type (1), i.e.\ $\calS_i$ is of the form
 $R(\vec v) \in \lambda(t)$, define $R^{\strD'} = \idom(\strD)^k \setminus
 R^{\strD}$. 

 If $\calS_i$ is of type (2), we distinguish between two types of variables:
 Those that occur in $\calI_t(\calS_i)$, and those that are ``new'' in $\calS_i$,
 i.e.\ those that do not appear in some node $t' \in \branch(t_i)$.
 For these ``new'' variables, we will use as domain the set $\idom(\str D)^{|\vec v|}$,
 i.e.\ the set of all possible assignments of values from $\strD$ to the
 vector $\vec v$. Moreover, we assume some the ordering of the variables in $\vec v$
 to be in such a way that for a tuple $\vec a \in \idom(D)^{|\vec v|}$, the value
 at position $i$ corresponds to the variable $v_i \in \vec v$. For the remaining
 variables (i.e.\ the variables in $\vec v$), we will use values from $\idom(\str D)$.
 Then for each atom $R(\vec y) \in \calS_i$, the values in $R^{\str D'}$ are defined
 as follows.
 \begin{itemize}
 	\item If $\vec y \subseteq \vec v$, then $R^{\strD'} = \idom(\strD)^k$,
 		where $k$ is the arity of $R$.

 	\item If $\vec y$ contains ``new'' variables, i.e.\ 
 		$\vec y' = \vec y \cap (\var(\calS_i) \setminus \vec v) \neq \emptyset$,
 		then $R^{\strD'}$ contains all tuples satisfying the following
 		three properties.

 		\begin{enumerate}
 			\item all the ``new'' variables $\vec y'$ get assigned the
 				same value (say $\vec a \in \idom(\str D)^{|\vec v|}$),

 			\item for all variables $v_i \in \vec v \cap \vec y$, the value
 				of $v_i$ is consistent with $\vec a$ 
 				(say $\vec v \cap \vec y = \{v_{i_1}, \dots, v_{i_m}\}$
 				and let $\vec b$ be the values assigned to 
 				$\{v_{i_1}, \dots, v_{i_m}\}$), and

 			\item there exists no tuple $\vec r \in R^{\strD}$ such that
 				$\vec r$ projected onto $\{v_{i_1}, \dots, v_{i_m}\}$ is
 				$\vec b$ (i.e., $\vec b \notin \pi_{i_1, \dots, i_m}(R^\strD)$).
 		\end{enumerate}
 \end{itemize}

 Note that all of these relations can be constructed in polynomial time
 because the arity of all relations is assumed to be bound.

 We claim that, with the definition above, for an assignment $\mu'$ to
 $\vec v$, we have $(\mu',\strD) \models R(\vec v)$ (i.e., $R(\mu'(\vec v)) \in
 \strD$) if and only if all extensions $\mu''$ of $\mu'$ to $\var(\lambda(t_i))$
 do not map all atoms in $\calS_i$ into $\strD'$,
 and thus also not $\lambda(t_i)$ (since $\calS_i \subseteq \lambda(t_i)$).
 To see this, first assume that $(\mu', \strD) \models R(\vec v)$. Let
 $\mu''$ be an extension of $\mu'$ to $\var(\calS_i)$ that satisfies conditions
 1. and 2. Then all variables in $\var(\calS_i) \setminus \vec v$ take the same
 value under $\mu''$, and this value is exactly the tuple $\mu'(\vec v)$.
 But then $\mu''$ does not satisfy condition 3., because $(\mu',\strD) \models
 R(\vec v)$ (This implies $\mu'(\vec v) \in R^{\strD}$, and thus
 $\mu'(\vec v)$ projected onto the variables in $\vec v \cap \vec y$
 gives exactly $\mu''_{|\vec v \cap \vec y}(\vec v \cap \vec y)$, contradicting
 3.). So $\mu''$ does not map all atoms in $\calS_i$ into $\strD'$.
 For the other direction,
 assume that no extension $\mu''$ of $\mu'$ maps all atoms in $\calS_i$ into
 $\str D'$.
 Then this is in particular true for those assignments satisfying 1. and
 2. Consequently, $\mu'_{|\vec v} = \mu'$ does not map all atoms in $\calS_i$
 on $\strD$ due to 3. But this means that $(\mu', \strD) \models R(\vec v)$,
 proving the claim.

 We now claim that the assignment $\mu$ setting all free variables $x$ in $T'$
 to $h(x)$ is an answer to $p$ on $\strD'$ if and only if the required extension
 $h'$ of $h$ exists.
 First, observe that $\mu \in p(\str D')$ if and only if 
 \begin{itemize}
 	\item there is an extension $\mu'$ of $\mu$ to $\var(T')$ that maps
 		all atoms in $\lambda(T')$ into $\strD'$
 		(of course, in general every subtree $T''$ with $\ifvar(T') = \idom(\mu)$
 		is a potential candidate, but given the construction of $\str D'$, the
 		subtree $T'$ is the only possible candidate)
 		and 

 	\item for all $t_i \in ch(T')$, we have that there does not
 		exists an extension of $\mu'_{t_i}$ onto $\lambda(t_i) \cup \lambda(N_i)$.
 		(In fact, extending the mapping to any descendant of $t_i$ that
 		contains some additional free variable would work. However, the
 		only nodes with non-empty relations in $\str D'$ are those mentioned
 		in $N$.)

 \end{itemize}

 By the construction of $\strD'$ for atoms in $\lambda(N)$, for every
 $t_i \in ch(T')$ it follows immediately that there exists an extension
 of $\mu'_{t_i}$ onto $\lambda(t_i) \cup \lambda(N_i)$ if and only if
 there exists an extension to $\lambda(t_i)$: Since for the atoms in 
 $\lambda(N)$ the database $\strD'$ contains all possible tuples,
 every extension $\mu''$ of $\mu'_{t_i}$ onto $\lambda(t_i)$ can be
 further extended to all atoms in $\lambda(N_i)$.

 Note that the existence of an extension of $\mu'_{t_i}$ onto $\lambda(t_i)$
 is, as we have seen before, equivalent to $\mu'$ satisfying $R(\vec v)$,
 the atom introduced for the node $t_i$ in $q$.
 So $\mu \in p(\strD')$ if and only if there is an extension $h'$ of $h$
 that is a homomorphism from $\str B$ into $\str D$. This completes the proof.
\end{proof}

Having defined the three tractability conditions, we next show how they
may be used to design an $\FPT$-algorithm for \eval{$\calP$}.
\begin{algorithm}
  \caption{EvalFPT($p$, $\strD$, $\mu$)}
  \begin{algorithmic}[1]\small
  	\State $T = T[\relvnode{T}]$ \Comment Remove all nodes from $T$ that are not relevant
    \ForAll {subtrees $T'$ of $T$ with $\ifvar(T') = \idom(\mu)$}
      \State Let $\{t_1, \dots, t_n\} = ch(T') \cap \relvnode{T}\}$ 
      \ForAll {$(\calS_1, \dots, \calS_n) \in 
      	\intcomp{t_1} \times \dots \times \intcomp{t_n}$}
      	\State $q$ $=$ ``$\anspred(\vec x) \leftarrow \lambda(T') \cup 
      		\{\cia{\calS_1}, \dots, \cia{\calS_n}\}$''
        	\Comment Let $\vec x$ contain all $x \in \ifvar(T')$
        \State $\strD'$ $=$ $\strD \cup \bigcup_{i=1}^{n} 
        	\{R_i(\nu(\vec v_i)) \mid \nu \in \stopmaps(\calS_i, \strD)\}$
        	\Comment Assume $\cia{T'}{\calS_i} = R_i(\vec v_i)$
        \If {$ \mu(\vec x) \in q(\strD')$} \Call{exit}{YES}
        \EndIf
      \EndFor    
    \EndFor
    \State \Call{exit}{NO}
\end{algorithmic}
\label{alg:fptalgo}
\end{algorithm}
The algorithm is outlined in Algorithm~\ref{alg:fptalgo}.

First of all, we discuss the only notion in Algorithm~\ref{alg:fptalgo} not yet
introduced in this section: $\stopmaps(\calS,\strD)$ for an interface component
$\calS$ and a database $\strD$. Recall that we said earlier that the intention
of the interface components is to ensure a mapping to be maximal not by testing
for extensions to the complete node, but to do these tests component wise.

The idea how to realize this is to store in $\strD'$ for each interface component $\calS$
those variable assignments $\nu$ to its inherited variables such that there
exists no extension $\nu' \colon \calS \ra \strD$ of $\nu \cup \mu$. 
These are exactly the values stored in $\stopmaps(\calS_i, \strD)$.

Formally, for a \wdpt $(T, \lambda, \calX)$, a subtree $T'$ of $T$, a child node
$t \in ch(T')$, an interface component $\calS \in \intcomp{t}$, a database
$\strD$, and a mapping $\mu \colon \calX' \ra \idom(\strD)$ (for $\calX' \subseteq
\calX)$), 
consider the mappings $\extmaps(\calS, \strD) =
\{ \eta|_{\calV_t(\calS)} \mid \eta \colon \var(\calS) \ra \idom(\strD), 
 \eta \text{ extends } \mu|_{\var(\calS)} \text{ and }
 \eta(\tau) \in \strD \text{ for all } \tau \in \calS \}$. 
I.e., $\extmaps$ contains exactly those mappings on $\calV_t(\calS)$
that \emph{can} be extended in a way that maps $\calS$ into $\strD$.
We thus get 
$\stopmaps(\calS, \strD) = \{ \nu \colon \calV_t(\calS) \ra \idom(\strD)
\mid \nu \notin \extmaps(\calS, \strD)\}$.

With this in place, we describe the idea of Algorithm~\ref{alg:fptalgo}.
Recall that given $\mu$, we have to find a mapping $\mu'$ extending $\mu$
that on the one hand (1) is a pp-solution, and on the other hand (2) is maximal.
Unlike the case without projection, where $T_\mu$ is easy to find, because
of the presence of existential variables, there might be up to an exponential
number of candidates for $T_\mu$: all subtrees $T'$ of $T$ with $\ifvar(T') 
= \idom(\mu)$. After removing all irrelevant nodes (they might make evaluation
unnecessarily hard), we thus check all of these candidates. 

If the required mapping $\mu'$ exists then, as discussed earlier, for
each child node of $T'$ there exists at least one interface component
to which $\mu'$ cannot be extended. Not knowing which interface components
these are, the algorithm iterates over all possible combinations (line 4).
In lines 5--7, the algorithm now checks whether there exists an extension
of $\mu$ that maps all of $\lambda(T')$ into $\strD$ (ensured by adding
$\lambda(T')$ to $q$), but none of the interface components
$\calS_1, \dots, \calS_n$. The latter property is equivalent to asking that
for each $\calS_i$, $\mu'$ must assign a value to its inherited interface
variables that cannot be extended. This is guaranteed by adding the
atoms $\cia{\calS_i}$ to $q$ and providing in $\strD'$ exactly the
values from $\stopmaps(\calS_i, \strD)$. 

In order to see that this indeed gives an $\FPT$ algorithm in case
tractability conditions (a), (b), and (c) are satisfied, note that
condition (b) ensures that the arity of each of the new relations for
the atoms $\cia{\calS}$ is at most $c$. Thus the size of these
relations (and thus the number of possible mappings in $\stopmaps(\calS,\strD)$)
is at most $|\idom(\strD)|^c$. Next, condition (a) ensures that for each
mapping $\nu \colon \calV_t(\calS) \ra \idom(\strD)$ deciding membership
in $\stopmaps(\calS,\strD)$ is in $\ptime$. Finally, condition (c) ensures
that the test in line~7 is feasible in polynomial time.

We note that the algorithm is an extension and refinement of the $\FPT$ algorithm
presented in \cite{KPS2016}. An inspection of \cite{KPS2016} reveals that
the conditions provided there imply our tractability conditions (a), (b),
and (c), but not vice-versa. 
In fact, our conditions explicitly describe the crucial properties of
their restrictions that allow the problem to be in $\FPT$.

From Algorithm~\ref{alg:fptalgo} we thus derive the following result.
\begin{theorem}\label{theo:fptwp}
	Let $\calP$ be a decidable class of simple \wdpts.
	If the tractability conditions (a), (b), and (c) hold for $\calP$,
	then \peval{$\calP$} can be solved in $\FPT$.
\end{theorem}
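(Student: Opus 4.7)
The plan is to establish the correctness of Algorithm~\ref{alg:fptalgo} and then to bound its running time using conditions (a), (b), and (c) in turn. First, I would justify preprocessing $T$ to $T[\relvnode{T}]$ in line~1 using the proposition on relevant nodes: deleting irrelevant nodes preserves $p(\strD)$, and it also preserves the semantics of interface components at the remaining nodes. With only relevant nodes in play, I would then characterise $\mu \in p(\strD)$ as follows: because $p$ is well-designed, a mapping $\mu$ on the free variables lies in $p(\strD)$ if and only if there exist a subtree $T'$ of $T$ containing the root with $\ifvar(T') = \idom(\mu)$ and an extension $\mu'$ of $\mu$ to $\var(T')$ mapping $\lambda(T')$ into $\strD$, such that $\mu'$ admits no extension to $\lambda(t_i)$ for any child $t_i$ of $T'$. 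This reformulation matches the outer loops of the algorithm.

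The second correctness step reduces the node-level non-extendability at each child $t_i$ to a disjunction of non-extendability statements, one per interface component. By definition of $\intcomp{t_i}$, the atoms of $\lambda(t_i)$ split along the interface variables $\calI(t_i,\hat t_i)$ into components that share no non-interface variable; hence an assignment on the interface extends to $\lambda(t_i)$ if and only if it extends, separately, to each $\calS \in \intcomp{t_i}$. Non-extendability to $\lambda(t_i)$ is therefore equivalent to the existence of at least one $\calS_i \in \intcomp{t_i}$ that ``blocks'' $\mu'$, in the sense that $\mu'|_{\calV_{t_i}(\calS_i)} \in \stopmaps(\calS_i,\strD)$. The algorithm enumerates one such blocking component per relevant child in line~4, and the CQ $q$ together with the augmented database $\strD'$ in lines~5--7 expresses exactly the conjunction ``there is an extension of $\mu$ realising $\lambda(T')$ on $\strD$ and assigning each $\calV_{t_i}(\calS_i)$ to a stop value of $\calS_i$''. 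Enumerating over all $T'$ with $\ifvar(T')=\idom(\mu)$ and all tuples $(\calS_1,\dots,\calS_n)$ thus covers all witnesses of $\mu \in p(\strD)$.

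For the running time, the numbers of subtrees $T'$ and of tuples of interface components iterated in lines~2 and~4 are bounded by a function of $|p|$ alone, so the two outer enumerations contribute only an FPT factor. Condition (b) bounds the arity of each component interface atom $\cia{\calS_i}$ by a constant $c$, so $|\stopmaps(\calS_i,\strD)| \le |\idom(\strD)|^c$; and membership of a single mapping in $\stopmaps(\calS_i,\strD)$ is an instance of $\extp$ on $(\str S_{\calV_{t_i}^+(\calS_i)}, \calS_i)$, which by condition (a) and Theorem~\ref{theo:extdichotomy} is decidable in polynomial time. Hence $\strD'$ is built in polynomial time in $|\strD|$ with an FPT overhead in $|p|$. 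Finally, condition (c) bounds the treewidth of $\extcore(\str S_{\ifvar(T')}, \lambda(T') \cup \bigcup_i \{\cia{\calS_i}\})$; viewing the test $\mu(\vec x) \in q(\strD')$ in line~7 as an instance of $\extp$ via Corollary~\ref{cor:nonBCQdichotomy} (the free variables of $q$ are $\vec x = \ifvar(T')$, and $\mu|_{\vec x}$ is the partial homomorphism on $\str S_{\ifvar(T')}$), Theorem~\ref{theo:extdichotomy} places this check in polynomial time in $|\strD'|$. Composing these bounds yields an FPT algorithm, proving the theorem.

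The main obstacle I expect is the interface-component decomposition used to replace non-extendability at $\lambda(t_i)$ by non-extendability at some $\calS_i \in \intcomp{t_i}$: one has to verify carefully that the singleton components of type~(1) and the $\calI(t,\hat t)$-components of type~(2) together cover every atom of $\lambda(t_i)$ and that, by well-designedness, no existential variable of $t_i$ couples two distinct components, so that extensions can be constructed component by component. Once this decomposition lemma is in place, the remainder is a bookkeeping exercise matching the three tractability conditions to the three potentially expensive parts of Algorithm~\ref{alg:fptalgo}.
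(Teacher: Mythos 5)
Your proposal is correct and follows essentially the same route as the paper: it establishes the correctness of Algorithm~\ref{alg:fptalgo} via the enumeration of candidate subtrees $T'$ and of blocking interface components per relevant child, and then matches conditions (a), (b), (c) to exactly the three cost centres the paper identifies (membership tests for $\stopmaps$, the size of the added relations, and the final CQ/extension check via the extension-core machinery). The component-wise decomposition of non-extendability that you flag as the main obstacle is indeed the key lemma, and your justification of it (distinct $\calI(t,\hat t)$-components share only interface variables, and well-designedness ensures the extension test depends only on $\mu'|_{\calI(t_i,\hat t_i)}$) is the argument the paper leaves implicit.
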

The correctness of the algorithm follows immediately from the above discussion.
For the runtime, observe that in addition to what we already discussed, the number
of loop-iterations in lines 2 and 6 are bounded in the size of $p$, which is the
parameter for the problem.

Combining Theorem~\ref{theo:fptwp} with Lemmas~\ref{lem:condAnec}, 
\ref{lem:condBnec}, and \ref{lem:condCnec}, we get the following
characterization.
\begin{theorem}\label{theo:dichoWithProjection}
  Assume that $\FPT \neq \wone$, and let $\calP$ be a decidable class of simple \wdpts.
  Then the following statements are equivalent.
  \begin{enumerate}
    \item The tractability conditions (a), (b), and (c) hold
      for $\calP$.
    \item \eval{$\calP$} is in $\FPT$.
  \end{enumerate}
\end{theorem}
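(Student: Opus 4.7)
The plan is to assemble the theorem from the three hardness lemmas and the algorithmic upper bound already established in this section. The direction $(1) \Rightarrow (2)$ is essentially Theorem~\ref{theo:fptwp}: if all three tractability conditions hold for $\calP$, Algorithm~\ref{alg:fptalgo} solves \eval{$\calP$} in $\FPT$, and we are done. So the bulk of the work lies in proving the contrapositive of $(2) \Rightarrow (1)$: namely, if any of the conditions (a), (b), (c) fails for $\calP$, then \peval{$\calP$} is not in $\FPT$ (under the assumption $\FPT \neq \wone$).

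For the contrapositive, I would split into three cases according to which tractability condition is violated. If condition (a) fails, Lemma~\ref{lem:condAnec} yields that \peval{$\calP$} is $\cowone$-hard. If condition (b) fails, Lemma~\ref{lem:condBnec} yields the same conclusion. If condition (c) fails, Lemma~\ref{lem:condCnec} gives $\wone$-hardness directly. In every case we need to conclude ``not in $\FPT$''. For $\wone$-hardness this is immediate under $\FPT \neq \wone$. For the $\cowone$-hardness cases I would argue as follows: a parameterized problem $E$ lies in $\FPT$ if and only if its complement $\overline{E}$ does (since $\FPT$ is closed under complement, each $\FPT$ algorithm being a decision procedure with the stated running time can have its answer flipped). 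Therefore if \peval{$\calP$} were in $\FPT$, then its complement would be in $\FPT$, and via the $\FPT$-reduction witnessing $\cowone$-hardness this would place a $\wone$-hard problem in $\FPT$, contradicting $\FPT \neq \wone$.

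The mildly delicate point, and the only part where one has to be a little careful, is to make sure the three hardness arguments apply to \emph{simple} \wdpts (which is the hypothesis of the theorem). An inspection of the constructions in Lemmas~\ref{lem:condAnec}, \ref{lem:condBnec}, and \ref{lem:condCnec} shows that the starting \wdpt produced in each proof is taken from $\calP$ itself, so the simplicity assumption is preserved free of charge. Likewise, each reduction is indeed an $\FPT$-reduction (the pattern tree is picked by a computation that only depends on the parameter, via decidability of $\calP$), so the chain of reductions composes correctly.

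Putting the pieces together: $(1) \Rightarrow (2)$ is Theorem~\ref{theo:fptwp}; and $\lnot(1) \Rightarrow \lnot(2)$ follows by applying whichever of Lemmas~\ref{lem:condAnec}--\ref{lem:condCnec} pertains to the failing condition, combined with the observation above that $\cowone$-hardness together with $\FPT \neq \wone$ rules out membership in $\FPT$. I do not expect any serious obstacle beyond routinely verifying that the three lemmas jointly cover every way in which $(1)$ can fail and that the complementation argument is sound for the $\cowone$-hard cases.
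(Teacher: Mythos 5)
Your proposal is correct and follows exactly the paper's argument: the theorem is obtained by combining Theorem~\ref{theo:fptwp} for the direction $(1)\Rightarrow(2)$ with Lemmas~\ref{lem:condAnec}, \ref{lem:condBnec}, and \ref{lem:condCnec} for the contrapositive of $(2)\Rightarrow(1)$. Your extra remarks on closure of $\FPT$ under complementation for the $\cowone$-hard cases and on the simplicity hypothesis are sound and merely make explicit what the paper leaves implicit.
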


We mentioned at the beginning of this section that most tractability results
also hold for arbitrary \wdpts instead of just simple ones. Observe that we make
use of simple \wdpts only in Lemmas~\ref{lem:condAnec}, \ref{lem:condBnec}, and
\ref{lem:condCnec}, as well as in the definition of relevant nodes.
In fact, whenever a arbitrary well-designed pattern tree satisfies the tractability
conditions for all nodes (instead of just the relevant ones), Algorithm~\ref{alg:fptalgo}
also provides an $\FPT$ algorithm for the evaluation problem.

\section{Relationship with SPARQL and Conclusion}
\label{sec:conclusion}
The results of Sections~\ref{sec:noproj} and \ref{sec:withproj} give
a fine understanding of the tractable classes of \pts without projection
and \wdpts in the presence of projection. 
In particular they show the different sources of hardness. 
As laid out in the introduction, there is a strong relationship between
(weakly) well-designed SPARQL queries and classes of \pts, namely the
weakly well-designed (\wwdpt) and well-designed (\wdpts) pattern trees,
respectively: For every (weakly) well-designed SPARQL query, an equivalent
(weakly) well-designed pattern tree can be computed in polynomial time,
and vice versa, in a completely syntactic way.

Since our results for projection-free \pts apply to all classes of \pts,
they therefore immediately apply to (weakly) well-designed \{{\sf AND},
{\sf OPTIONAL}\}-SPARQL queries as well. 
Note that the correspondence is unfortunately less tight for the case with
projections.
Not only because we study only well-designed pattern trees instead
of arbitrary ones, but recall that our characterization only applies for
classes of \emph{simple} well-designed pattern trees. 
However, RDF triples and SPARQL triple patterns, in the relational
model, are usually represented with a single (ternary) relation. 
Thus, there is no direct translation to and from simple (well-designed)
pattern trees. As a consequence, in the presence of projection, our
characterization of tractable classes of simple well-designed pattern trees
does not imply an immediate
characterization of the tractable classes of well-designed 
\{{\sf AND}, {\sf OPTIONAL}\}-SPARQL queries.

Nevertheless, our results also give interesting insights to SPARQL with projections.
First of all, Algorithm~\ref{alg:fptalgo} can directly be applied without any 
changes also for queries in which relation symbols can appear several times and 
thus in particular for well-designed pattern trees that result from the translation
of well-designed SPARQL queries.
Moreover, our result determines completely the tractable classes that can be 
characterized by analyzing only the underlying graph structure of the 
queries, i.e., the Gaifman graph. Indeed, since simple queries can simulate 
all other queries sharing the same Gaifman graph by duplicating relations, 
Gaifman graph based techniques have exactly the same limits as simple queries.
Thus, our work gives significant information on limits of tractability 
for SPARQL queries in the same way as e.g.~\cite{GroheSS01,Chen14} did in 
similar contexts.

Let us mention the major stumbling block towards a characterization of 
non-simple well-designed pattern trees with projections: In the proof of
Lemma~\ref{lem:condBnec}, 
we have used a reduction from quantified conjunctive queries.
Unfortunately, the tractable classes for the non-simple fragment and the correct notion of cores
for that problem are not well understood which limits our result to simple queries since we are using the respective 
results from~\cite{CD2012}. Note that we might have been able to give a 
more fine-grained result in sorted logics by using~\cite{ChenM13}, but since 
this would, in our opinion, not have been very natural in our setting, we did not pursue 
this direction. Thus getting an even better understanding of non-simple pattern trees
would either need progress on quantified conjunctive queries or a reduction 
from another problem that is better understood.

For future work, there are further SPARQL features
that we did not include in the framework studied in this paper. The most prominent
among them are of course {\sf FILTER} expressions. Let us remark that pattern trees with
{\sf FILTER} expressions are easily seen to subsume conjunctive queries with inequalities---just
consider patter trees consisting of only one node---and thus in particular also graph embedding problems.
The tractable fragments of the latter are a notorious problem that has resisted 
the efforts of the parameterized complexity community for a long time now, even 
though there has recently been progress in the area, see e.g.~\cite{Lin15,ChenGL17}. Thus
showing a complete characterization of the tractable classes for SPARQL queries 
with {\sf FILTER} is probably very hard. Still, it would be interesting to
give algorithms extending our results to that fragment and maybe giving lower bounds
based on the conjectured dichotomy for embedding problems.

\newpage

\bibliography{references}

\end{document}